\newtheorem{theorem}{$\mathbf{Theorem}$}
\newtheorem{proposition}[theorem]{Proposition}
\newtheorem{corollary}[theorem]{$\mathbf{Corollary}$}
\begin{document}
\title{Cluster Content Caching: An Energy-Efficient Approach to Improve Quality of Service in Cloud Radio Access Networks
}

\author{Zhongyuan Zhao, \IEEEmembership{Member, IEEE}, Mugen Peng, \IEEEmembership{Senior Member, IEEE}, Zhiguo Ding, \IEEEmembership{Senior Member, IEEE}, Wenbo Wang, \IEEEmembership{Senior Member, IEEE}, and H. Vincent Poor, \IEEEmembership{Fellow, IEEE}
\thanks{Zhongyuan~Zhao (e-mail: zyzhao@bupt.edu.cn), Mugen~Peng (e-mail: pmg@bupt.edu.cn), and Wenbo Wang (e-mail: wbwang@bupt.edu.cn) are with the Key Laboratory of Universal Wireless Communications (Ministry of Education), Beijing University of Posts and Telecommunications, Beijing, China. Zhiguo Ding (z.ding@lancaster.ac.uk) is with the School of Computing and Communications, Lancaster University, Lancaster, LA1 4YW, UK. H. Vincent Poor (poor@princeton.edu) is with the Department of Electrical Engineering, Princeton University, Princeton, NJ, USA.}
\thanks{The work of Z. Zhao, M. Peng and W. Wang was supported by National Natural Science Foundation of China (Grant No. 61501045, No. 61361166005), the  National High Technology Research and Development Program of China (Grant No. 2014AA01A701), the State Major Science and Technology Special Projects (Grant No. 2016ZX03001020-006), and the Fundamental Research Funds for the Central Universities. The work of Z. Ding was supported by the UK EPSRC under grant number EP/L025272/1 and by H2020-MSCA-RISE-2015 under grant number 690750. The work of H. V. Poor was supported in part by the U.S. National Science Foundation under Grant ECCS-1343210.}}
\maketitle
\begin{abstract}
In cloud radio access networks (C-RANs), a substantial amount of data must be exchanged in both backhaul and fronthaul links, which causes high power consumption and poor quality of service (QoS) experience for real-time services. To solve this problem, a cluster content caching structure is proposed in this paper, which takes full advantage of distributed caching and centralized signal processing. In particular, redundant traffic on the backhaul can be reduced because the cluster content cache provides a part of required content objects for remote radio heads (RRHs) connected to a common edge cloud. Tractable expressions for both effective capacity and energy efficiency performance are derived, which show that the proposed structure can improve QoS guarantees with a lower power cost of local storage. Furthermore, to fully explore the potential of the proposed cluster content caching structure, the joint design of resource allocation and RRH association is optimized, and two distributed algorithms are accordingly proposed. Simulation results verify the accuracy of the analytical results and show the performance gains achieved by cluster content caching in C-RANs.
\end{abstract}

\begin{IEEEkeywords}
Content caching, energy efficiency, effective capacity, cloud-radio access networks, resource allocation
\end{IEEEkeywords}

\IEEEpeerreviewmaketitle

\vspace{0.25in}
\section{Introduction}
The cloud radio access network (C-RAN) is an energy-efficient network architecture for mobile operators to provide high data rate service, in which remote radio heads (RRHs) connect with a cloud-based baseband unit (BBU) via fiber fronthaul links \cite{b1}. In addition to savings on capital and operational expenditures and reducing power consumption, the centralization of BBUs enables large-scale signal processing and resource management, and the theoretical performance limits have been studied in \cite{b2}. In practical transmission situations, the capacity of both the fronthaul and the backhaul may not be able to support the significant number of low-latency data exchanges between RRHs and the BBU pool. Although some efficient signal compression methods are proposed in C-RANs, such as \cite{b11} and \cite{b14}, it is insufficient to satisfy the dramatically increasing requirements from mobile users for real-time services with high quality of service (QoS) guarantees. To solve this problem, some evolved network architectures of C-RANs have been proposed, such as heterogeneous cloud radio access networks (H-CRANs) in \cite{b80} and \cite{b6} and edge cloud radio access networks (EC-RANs) in \cite{b4}. In particular, H-CRANs and EC-RANs can balance the loadings of the fronthaul and the backhaul by separating the controller from the cloud center and decentralizing the BBU pools, respectively \cite{b81}. The performance gains of H-CRANs and EC-RANs have been evaluated in \cite{b10} and \cite{b9}, which demonstrated that they have great potentials to improve both spectral and energy efficiencies.

A key network architecture feature of C-RANs and their related extensions is that the distance between a user and its accessed RRH is significantly shortened, which suggests that wireless networks are evolving from a base station-centric architecture to a user-centric architecture, as well as from a connection-centric purpose to a content-centric purpose. Compared with the conventional network architectures, content-centric networks pay more attention to QoS guarantees, and some dynamic storage units, termed content caches are employed. In \cite{b61} and \cite{b62}, the concept of proactive caching has been proposed in heterogenous networks (HetNets), where the content caches are deployed at the edge of networks, such as the small cells and the users. The performance improvement achieved by proactive caching has been verified in \cite{b63}. However, these edge caching strategies are not applicable in C-RANs due to the absence of the BBU at RRHs. In particular, the content in edge caches should be transferred to the cloud BBU pool, and then the radio frequency version of information will be sent back to RRHs. Therefore, the content delivery with edge caches cannot be accomplished between RRHs and the users only, which causes poor delay experience. In the paradigm of C-RANs, content caches are equipped at the cloud center to fully exploit the computational capability. In \cite{b20}, a centralized content caching structure was proposed, and a test content-centric network based on the architecture of C-RANs was accordingly introduced. A hierarchical content caching structure has been proposed in \cite{b18}, and the coordination between the cloud cache and the edge caches at the base stations was optimized in \cite{b19}.

Although content caching is valid to improve the QoS and the energy efficiency of wireless networks, it is hard to find a suitable metric to evaluate the performance gains precisely. The existing works, such as in \cite{b2}, cannot characterize the impacts of content caching on the QoS in C-RANs. Based on the constant source arrival rate assumption, effective capacity has been defined as a link-level QoS metric for a wireless channel in \cite{b22}. The concept of effective capacity has been shown to be an efficient criterion to capture the delay experience of data services, which has been studied in the basic scenarios of wireless communications \cite{b24,b40,b25}.

{Content-centric C-RANs have substantial potential with several challenges: First, although the existing content-centric C-RAN concept can reduce the service delay, it is not easy to decide where to deploy the content caches since it is a dilemma to balance the centralized signal processing and the distributed caching. Secondly, the existing studies of QoS evaluation in wireless systems mostly focus on interference-free scenarios, which are not suitable for analyzing the performance of content caching in the typical interference-limited C-RANs. Thirdly, due to the limited storage volume, it is still not straightforward to fully exploit the potential of content caches in C-RANs.}

Therefore, motivated by the necessity of network architecture enhancement and QoS metric establishment, a cluster content caching structure in C-RANs is studied in this paper, and the performance analysis and optimization algorithm design are studied as well. Our main contributions can be summarized as follows.

{\textit{Firstly}, a cluster content caching structure is proposed in edge cloud-radio access networks (EC-RANs), in which RRHs in the same cluster share a common local cluster content cache. Our proposed structure can take full advantages of centralized signal processing and distributed caching in C-RANs, and the capacity constraint of backhaul links is analyzed to show the improvement of delay experience achieved by our proposed structure.

\textit{Secondly}, by formulating a stochastic geometry-based network model, both the effective capacity and the energy efficiency of the proposed cluster content caching structure are analyzed, and the corresponding tractable expressions are derived. The analytical results show that our proposed structure can improve the QoS guarantees in an energy-efficient way because some requests are locally responded to with a low cost of storage in each cluster. Simulation results show that the effective capacity and the energy efficiency can be improved up to 0.57 Mbit/s/Hz and 0.004 Mbit/Joule when the number of required content objects is five.

\textit{Thirdly}, to further improve the performance gains of cluster content caching in C-RANs, the joint design of radio resource unit (RRU) allocation and RRH association is optimized, which can be solved by a nested coalition formation game. Moreover, to reduce the computational complexity, the RRU allocation problem is solved by applying the Shapley value. By employing the proposed optimization algorithms, the performance gains can be enlarged to 0.95 Mbit/s/Hz and 0.0055 Mbit/Joule, respectively.}

The rest of this paper is organized as follows. Section \uppercase\expandafter{\romannumeral2} describes the system model, and the cluster content caching structure in C-RANs is also proposed, which shows that the proposed structure can reduce the loadings on backhaul links. In Section \uppercase\expandafter{\romannumeral3}, both the effective capacity and the energy efficiency are studied, and the tractable expressions will be provided. In Section \uppercase\expandafter{\romannumeral4}, the joint design of RRU allocation and RRH association is studied, and a nested coalition formation game-based algorithm is provided. To reduce the computational complexity, a sub-optimal algorithm is given in Section \uppercase\expandafter{\romannumeral5}. The simulation results are shown in Section \uppercase\expandafter{\romannumeral6}, followed by the conclusion in Section \uppercase\expandafter{\romannumeral7}.

\section{System Model and Protocol}

To improve the QoS guarantees of real time services, content caches are employed in C-RANs. For conventional C-RANs as illustrated in Fig. \ref{fig:subfig:fig1a}, a cloud based content cache is included at the cloud center, where both a cloud BBU pool and a centralized controller are included as well. RRHs connect to the BBU pool via fronthaul, while the BBU pool connects to the content cloud through backhaul. In this fully centralized network architecture, heavy burdens are imposed on both backhaul and fronthaul due to a substantial amount of data exchanging between the cloud center and RRHs, which causes high power consumption and long delay. To solve this problem, a paradigm of EC-RANs with cluster content caching is studied in this paper. As shown in Fig. \ref{fig:subfig:fig1b}, the loadings on both fronthaul and backhaul links can be balanced by partially decentralizing the baseband signal processing and the control functions into the edge devices, such as base stations and user equipments with cache. Moreover, by employing a cluster content cache in each cluster, the QoS guarantee and energy efficiency can be further improved because some requests are immediately responded to in the cluster with a low cost on local storage. Note that the cluster is dynamically formed in a edge manner according to the transmitting packet services, the wireless transmit capabilities, and etc.

Without loss of generality, we focus on a typical cluster $\mathcal{C}_T$ in Fig. \ref{fig:subfig:fig1b}, where RRHs are connected to a common edge cloud, and the cluster-scale joint management, such as scheduling and resource allocations, can be implemented. Both a cluster content cache $\mathcal{U}_T$ and a cloud content cache $\mathcal{U}_C$ are deployed to fully exploit the potential of content caching in C-RANs. The content requests from served users are aggregated at the edge cloud in $\mathcal{C}_T$, and can be treated by using the following strategy: First, $\mathcal{C}_T$ checks its cluster content cache $\mathcal{U}_T$, and the requests can be served immediately if the desired content is available at $\mathcal{U}_T$. Otherwise, the requests will be forwarded to the cloud content cache, and then the corresponding content can be provided through the backhaul link from the content cloud. Then the requests can be handled similarly to the case in which the content resides in the cluster content cache $\mathcal{U}_T$.
\begin{figure} \centering
\subfigure{\label{fig:subfig:fig1a} \includegraphics[width=2.5in]{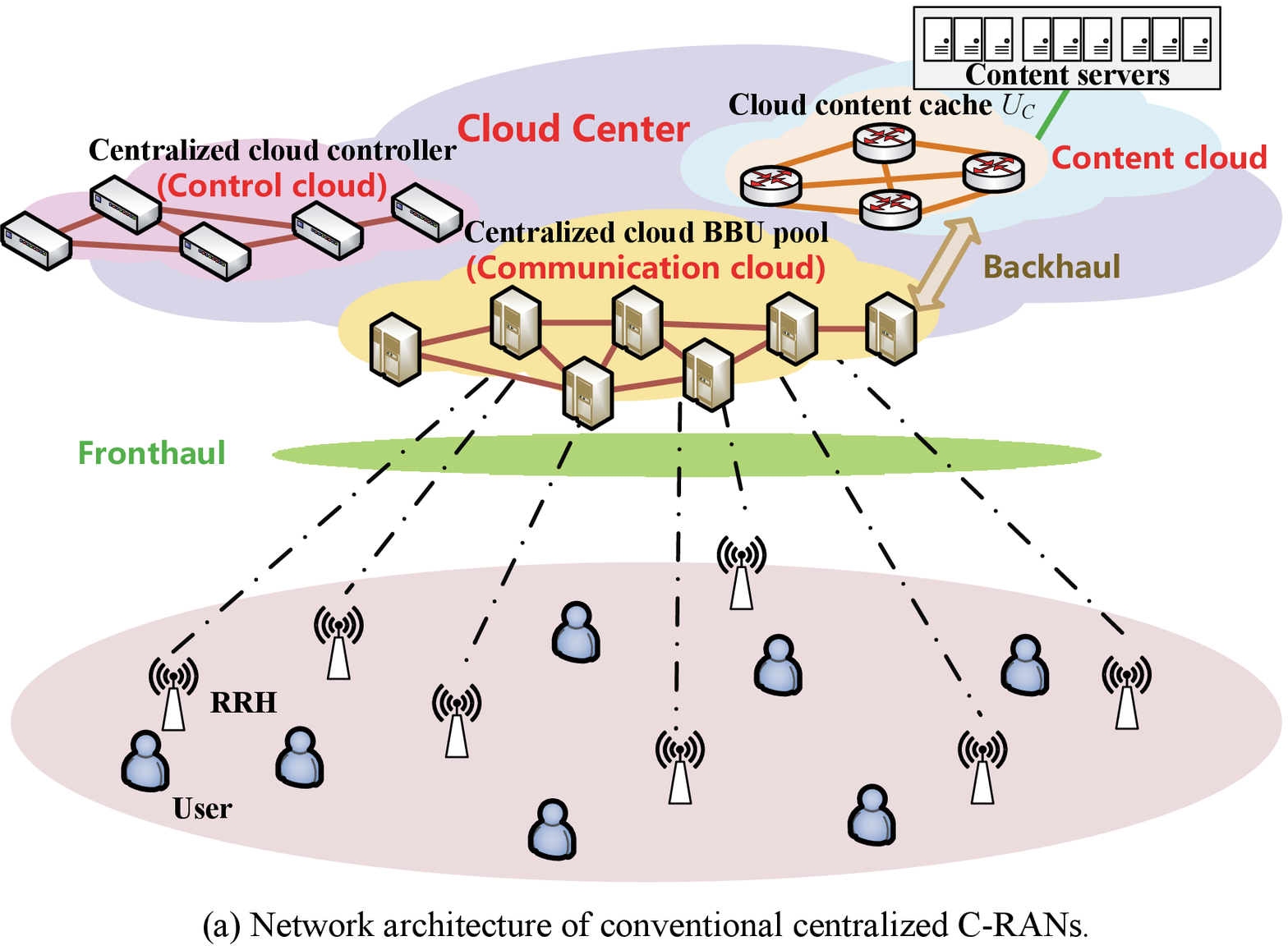}}
\subfigure{\label{fig:subfig:fig1b} \includegraphics[width=2.5in]{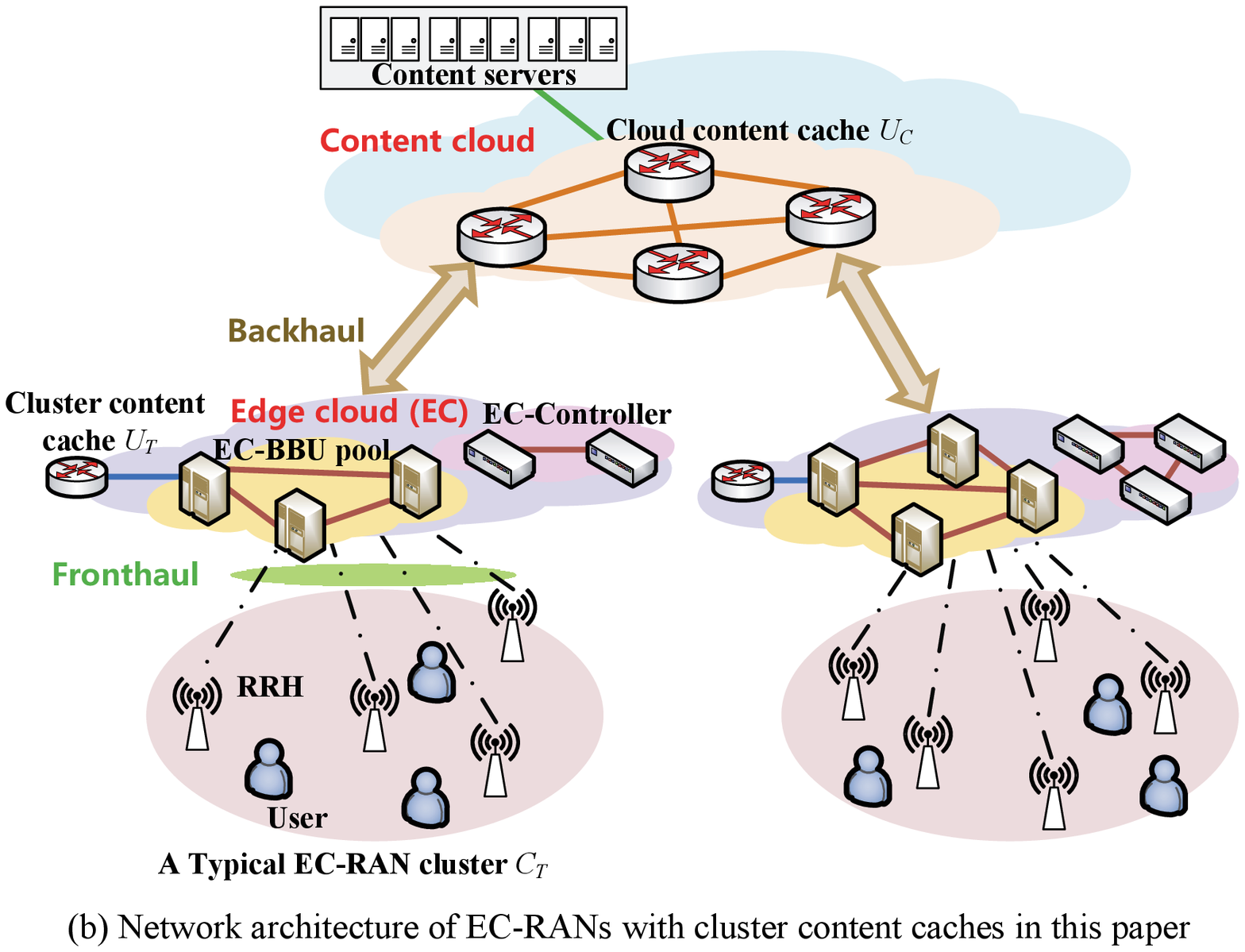}}
\caption{The illustrations of C-RANs and EC-RANs with content caches.} \label{fig:subfig}
\end{figure}

{For convenience, we assume that the cloud content cache $\mathcal{U}_C$ stores the set of all the content objects required by the users in $\mathcal{C}_T$, which is denoted as a limited set $\Omega_C = \{S_1,\ldots,S_L\}$, with $S_1,\ldots,S_L$ all of the same size $B_S$. Moreover, the content objects kept in the local cluster cache $\mathcal{U}_T$, which are selected randomly from $\Omega_C$, can be treated as the members belonging to a subset of $\Omega_C$, e.g., $\Omega_T = \{S_{T_1},\ldots,S_{T_K}\} \subseteq \Omega_C$. A radio resource unit (RRU) is defined as a set of resources in the time and frequency domains that are allocated to accomplish the transmission of a specific content object through radio access channels. Each RRH can serve only one content object in each RRU due to the single-antenna deployment.} To provide a tractable expression of channel capacity for a link from RRH to user in C-RANs, the observation at a typical user $\mathrm{u}_T$ in $\mathcal{C}_T$ can be expressed as
{\begin{eqnarray}\label{eqn:obs}
y_T = \sqrt{\rho} h_m d_m^{-\beta / 2} s_m + \sum_{\mathrm{R}_j \in \mathcal{I}_\mathrm{R}} \sqrt{\rho} h_j d_j^{-\beta / 2} s_j + n_T,
\end{eqnarray}}
where $s_m$ is the desired message for $\mathrm{u}_T$ from its serving RRH $\mathrm{R}_m$, the channels are modeled by including Rayleigh fading and path loss in this paper, e.g., $h_m$ denotes the flat Rayleigh fading for the link from $\mathrm{R}_m$ to $\mathrm{u}_T$, $d_m$ is the distance between $\mathrm{R}_m$ and $\mathrm{u}_T$, $\beta$ is the path loss exponent, {$\mathcal{I}_\mathrm{R}$ denotes the set of all the interfering RRHs}, $s_j$, $h_j$ and $d_j$ are defined similarly for a specific interfering RRH $\mathrm{R}_j$, $j \neq m$, $h_m, h_j \sim \mathcal{CN}(0,1)$, $n_T$ is the additive Gaussian noise with unit covariance, and $\rho$ is the average signal-to-noise ratio (SNR). Due to \eqref{eqn:obs}, the channel capacity can be expressed as follows:
\begin{equation}\label{eqn:shannon}
C = \mu W \log(1 + \gamma),~\mathrm{where}~\gamma = \frac{\rho d_m^{-\beta} |h_m|^2}{\sum_{\mathrm{R}_j \in \mathcal{I}_\mathrm{R}} \rho d_j^{-\beta} |h_j|^2 + \sigma^2},
\end{equation}
where $\mu$ denotes the spectral efficiency. {In particular, $\mu$ determines how many content objects can be served in each RRU, e.g., $\mu = L B_S / (NWT)$ bit/s/Hz when $L$ content objects can be served in $N$ RRUs, where the time length and the spectrum bandwidth of a RRU are denoted by $T$ and $W$, respectively.

As shown in \eqref{eqn:shannon}, the channel capacity is based on the ideal delay assumption, and thus cannot characterize the QoS of requested content. To provide a tractable QoS metric of a wireless channel, the concept of effective capacity is introduced as follows.

\subsection{The Theory of Effective Capacity in Wireless Channels}
It is challenging to evaluate QoS that can be supported by a wireless link due to its inconstant transmission condition. In \cite{b22}, the concept of effective capacity is provided as a feasible solution of this problem. In particular, assuming there exists a queue of infinite size with a constant arrival rate at the source, effective capacity can characterize the maximum arrival rate that can be supported by a wireless channel with a specific QoS guarantee, which is defined as follows \cite{b22}:
\begin{eqnarray}\label{eqn:effect1}
E(\theta) = -\lim \limits_{t \rightarrow \infty} \frac{1}{\theta t} \log \mathbb{E} \big \{e^{- \theta S(t)}\big\},
\end{eqnarray}
where $S(t) = \sum_{0 = t_0<t_1< \cdots < t_n = t} \int_{t_{i-1}}^{t_i} r(\tau) \mathrm{d} \tau$ is the delivered service through a wireless channel in bits over the time interval [0, t), $r(\tau)$ denotes the channel capacity at time $\tau$, and $\theta$ is the QoS exponent. To specify the delay constraint, $\theta$ is defined as the decay rate of the tail distribution with a stochastic queue length $Q$,
\begin{eqnarray}\label{eqn:theta}
\theta = \lim \limits_{q \rightarrow \infty} \frac{\log \Pr\{Q > q\}}{q}.
\end{eqnarray}
For a large value of threshold $q_{\max}$, the buffer violation probability can be approximated as $\Pr\{Q > q_{\max}\} \approx e^{- \theta q_{\max}}$, and the delay violation probability can be bounded as $\Pr\{D > d_{\max}\} \leqslant c \sqrt{\Pr\{Q > q_{\max}\}}$, where $c$ is a positive constant related to the arrival rate \cite{b22}. Therefore, a smaller $\theta$ implies a looser QoS requirement, while larger $\theta$ denotes a stricter QoS requirement. Due to \eqref{eqn:effect1} and \eqref{eqn:theta}, the effective capacity can characterize the relationship between delay experience and wireless channel capacity.

Under the block fading channel assumption, each channel coefficient is a constant in each RRU, and the effective capacity can be further derived as \cite{b40}
\begin{eqnarray}\label{eqn:effect2}
E(\theta) = - \frac{1}{\theta W \bar{T}} \ln \mathbb{E} \big \{(1 + \gamma)^{- \mu \theta W \bar{T}}\big\},
\end{eqnarray}
where $\bar{T} = T / \ln2$.}

\begin{figure} \centering
\subfigure{\label{fig:subfig:quemod_a} \includegraphics[width=3in]{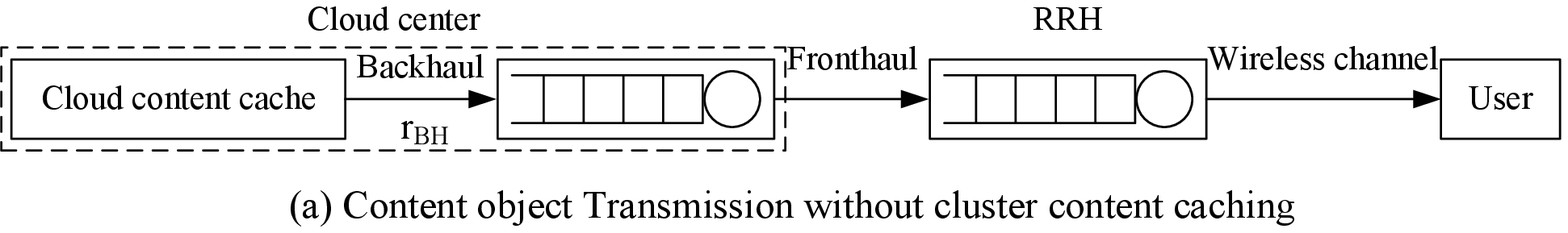}}
\subfigure{\label{fig:subfig:quemod_b} \includegraphics[width=3in]{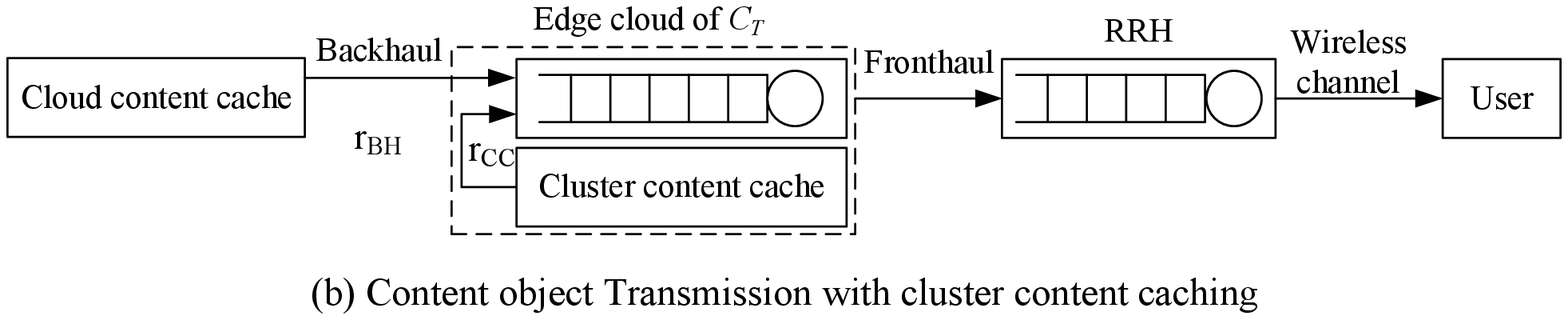}}
\caption{Queueing model of content transmissions in C-RANs.
} \label{fig:que mod}
\end{figure}

\subsection{Capacity Constraint of Backhaul in C-RANs with Cluster Content Caching}
Since some objects can be obtained locally, cluster content caching improves the QoS guarantees by reducing delay, and migrates the loading on backhaul links. As shown in Fig. \ref{fig:que mod}, our studied content objects are sent to users via multi-hop tandem links. In particular, a content object can be transmitted to the edge cloud in each C-RAN cluster through a wired backhaul link or from a local cluster content cache, whose arrival data rates are fixed and can be denoted as $r_{\mathrm{BH}}$ and $r_{\mathrm{CC}}$, respectively. Then it is delivered to RRHs through fiber fronthaul links, and finally can be sent to the users by using wireless channels. The corresponding experienced delay can be given as the following proposition.
\begin{proposition}
(Proposition 5, \cite{b25}) Assume that a network carries packetized traffic, and consists of $N_{\mathrm{h}}$ hops. Given an external arrival process with constant data rate $r$ and constant packet size $B$, the end-to-end delay $D$ experienced by the traffic can be expressed as
\begin{eqnarray}\label{eqn:del qos}
\lim \limits_{D_{\max} \rightarrow \infty} \frac{\log \Pr \{D > D_{\max}\}}{D_{\max} - (N_{\mathrm{h}} B) / r} = - \theta,
\end{eqnarray}
where $\theta$ denotes the QoS exponent defined by \eqref{eqn:theta}, and the effective capacity satisfies $E(\theta / r) = r$. Based on \eqref{eqn:del qos}, the following approximation can be obtained for a large delay threshold $D_{\max}$:
\begin{eqnarray}\label{eqn:del ineq}
\Pr \{D > D_{\max}\} \approx e^{- \theta (D_{\max} - \frac{N_{\mathrm{h}} B}{r})}.
\end{eqnarray}
\end{proposition}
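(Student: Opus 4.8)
Because this proposition restates (Proposition~5 of) \cite{b25}, one may simply invoke that reference; to keep the argument self-contained I sketch the reconstruction I would use. The plan is to regard the $N_{\mathrm{h}}$-hop path of Fig.~\ref{fig:que mod} as a tandem concatenation of single-server queues --- the backhaul (or cluster-cache) stage, the fronthaul stage, and the wireless stage --- all carrying the same packetized, constant-rate-$r$ stream, and to assemble the end-to-end delay bound from the single-hop effective-capacity result plus a concatenation step.

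First I would recall the single-hop characterization of \cite{b22}: for one queue fed at constant rate $r$, the stationary backlog satisfies $\Pr\{Q_i>q\}\approx e^{-\theta^\star q}$, where $\theta^\star$ is the unique solution of $E_i(\theta^\star)=r$, i.e. the crossing of the (constant) effective bandwidth $r$ of the source with the effective capacity of that hop's server, existence of the crossing being the stability condition $E_i(0)>r$. Passing from backlog to delay via $D_i\approx Q_i/r$ in the critically loaded regime gives a per-hop delay tail with exponent $\theta:=r\theta^\star$, which is exactly the relation $E(\theta/r)=r$ appearing in the statement.

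Next I would propagate this through the tandem. The key fact is that a stable queue does not inflate the effective bandwidth of the flow passing through it, so each downstream stage again sees a rate-$r$ input and is itself stable with effective capacity at least $r$; hence all stages share the governing exponent $\theta$. Concatenating the per-hop statistical service curves, equivalently composing the per-hop moment-generating-function bounds, makes the latency terms add deterministically while the exponent stays $\theta$: under critical loading each store-and-forward stage must clock out a packet of $B$ bits at rate $r$, contributing the minimum transit time $B/r$, so the $N_{\mathrm{h}}$ stages contribute the deterministic offset $N_{\mathrm{h}}B/r$, and the residual randomness (from the fluctuating wireless capacity about its mean) is what carries the exponent $\theta$. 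This yields $\Pr\{D>D_{\max}\}\approx e^{-\theta(D_{\max}-N_{\mathrm{h}}B/r)}$ for large $D_{\max}$, which is \eqref{eqn:del ineq}; the exact limiting identity \eqref{eqn:del qos} is its logarithmic form.

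The step I expect to be the main obstacle is the concatenation. Per-hop delays are strongly dependent --- the output burstiness of one queue is precisely the input burstiness of the next --- so a naive ``sum of exponential tails'' estimate would only deliver a Gamma-type tail with a polynomial prefactor and would not isolate the clean deterministic shift $N_{\mathrm{h}}B/r$. Making the offset tight rather than merely an upper bound requires the moment-generating-function network-calculus machinery in which statistical service curves convolve, or a direct large-deviations analysis of the tandem; I would borrow that step from \cite{b25} rather than redo it here.
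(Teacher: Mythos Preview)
Your proposal is correct, and in fact goes further than the paper: the paper does not prove this proposition at all but simply quotes it verbatim from \cite{b25} (Wu and Negi), as the parenthetical ``(Proposition~5, \cite{b25})'' indicates. Your opening sentence already captures the paper's entire treatment; the reconstruction sketch you supply --- single-hop effective-capacity crossing, preservation of effective bandwidth through a stable queue, and concatenation of statistical service curves to produce the additive deterministic offset $N_{\mathrm{h}}B/r$ with a common exponent $\theta$ --- is extra material the paper does not attempt, and your caveat about the concatenation step being the delicate part (and being borrowed from \cite{b25}) is well placed.
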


Proposition 1 implies that a higher arrival data rate $r$ leads to a smaller QoS exponent $\theta$, and improves delay experience. The QoS exponents of a content object $S_l$ that is responded by the cluster content cache $\mathcal{U}_T$ and the cloud cache $\mathcal{U}_C$ are denoted as $\theta^{\mathrm{T}}_l$ and $\theta^{\mathrm{C}}_l$, respectively. Based on Proposition 1, the following theorem of the relationship between $\theta^{\mathrm{T}}_l$ and $\theta^{\mathrm{C}}_l$ can be provided.
\begin{theorem}
To achieve the same delay experience, the QoS exponents of content cloud and cluster caching to deliver a specific content object $S_l$ should satisfy the following condition:
\begin{eqnarray}\label{eqn:qos cc lc}
\theta^{\mathrm{C}}_l = \frac{1}{1 - \frac{2 B_S}{r_{\mathrm{BH}} D_{\max}}} \theta^{\mathrm{T}}_l.
\end{eqnarray}
\end{theorem}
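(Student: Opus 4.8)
The plan is to read off the delay--violation behaviour of the two delivery paths from Proposition~1 (equivalently, from \eqref{eqn:del ineq}) and then to interpret ``the same delay experience'' as the requirement that the two paths achieve the same $\Pr\{D>D_{\max}\}$ at a common threshold $D_{\max}$.

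First I would set up the two tandem queueing chains of Fig.~\ref{fig:que mod}. When the content object $S_l$ is served from the cluster content cache $\mathcal{U}_T$, it is already available at the edge cloud of $\mathcal{C}_T$ and only has to be pushed over the (fast) fronthaul and then over the radio access link. When $S_l$ is served instead from the cloud cache $\mathcal{U}_C$, the very same downstream chain is preceded by the backhaul: the request is forwarded up to the content cloud and the object is carried back down over the backhaul link, that is, two backhaul hops (cf.\ Fig.~\ref{fig:que mod}), each contributing a deterministic delay $B_S/r_{\mathrm{BH}}$ to the $N_{\mathrm h}B/r$ term of Proposition~1. Neglecting the (small, and identical) fronthaul--radio transit offset for both paths, \eqref{eqn:del ineq} then gives $\Pr\{D>D_{\max}\}\approx e^{-\theta^{\mathrm{T}}_l D_{\max}}$ for the cluster-cache path and $\Pr\{D>D_{\max}\}\approx e^{-\theta^{\mathrm{C}}_l(D_{\max}-2B_S/r_{\mathrm{BH}})}$ for the cloud-cache path.

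Equating the two violation probabilities (``the same delay experience'') and taking logarithms gives the linear relation $\theta^{\mathrm{T}}_l D_{\max}=\theta^{\mathrm{C}}_l\,(D_{\max}-2B_S/r_{\mathrm{BH}})$, and dividing by $D_{\max}-2B_S/r_{\mathrm{BH}}$ yields \eqref{eqn:qos cc lc}. The algebra is one line; the step I expect to be the main obstacle is the offset bookkeeping --- justifying that the downstream (fronthaul plus radio) portion is common to the two paths and is dropped in both, so that only the backhaul differential enters, and that this differential is exactly $2B_S/r_{\mathrm{BH}}$, i.e.\ two backhaul hops (request forwarding plus content retrieval). Once the offset is fixed, the theorem follows immediately from \eqref{eqn:del ineq}.
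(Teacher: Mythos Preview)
Your proposal is correct and follows essentially the same route as the paper. The paper starts from the limit form \eqref{eqn:del qos} of Proposition~1, writes $\theta^{\mathrm{C}}_l$ and $\theta^{\mathrm{T}}_l$ with offsets $2B_S/r_{\mathrm{BH}}$ and $2B_S/r_{\mathrm{CC}}\to 0$ respectively, and then substitutes $\Pr\{D>D_{\max}\}\approx e^{-\theta^{\mathrm{T}}_l D_{\max}}$ into the difference $1/\theta^{\mathrm{C}}_l-1/\theta^{\mathrm{T}}_l$; you instead apply \eqref{eqn:del ineq} directly to both paths and equate the two exponents, which is the same argument one algebraic step earlier. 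The only cosmetic difference is your justification for the zero offset on the cluster path (you drop the common downstream portion, whereas the paper sends $r_{\mathrm{CC}}\to\infty$), but both lead to the same pair of offsets and hence to \eqref{eqn:qos cc lc}.
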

\begin{proof}
Based on \eqref{eqn:del qos}, the QoS exponent of cloud content caching can be expressed as
\begin{eqnarray}\label{eqn:qos exp cc}
\theta^{\mathrm{C}}_l = - \lim \limits_{D_{\max} \rightarrow \infty} \frac{\log \Pr \{D > D_{\max}\}}{D_{\max} - (2 B_S) / r_{\mathrm{BH}}}.
\end{eqnarray}
Similarly, the QoS exponent of cluster content caching can be given as
\begin{eqnarray}\label{eqn:qos exp lc}
\nonumber \theta^{\mathrm{T}}_l &=& - \lim \limits_{D_{\max} \rightarrow \infty} \frac{\log \Pr \{D > D_{\max}\}}{D_{\max} - (2 B_S) / r_{\mathrm{CC}}} \\
&\overset{\mathrm{(a)}}{\approx}& - \lim \limits_{D_{\max} \rightarrow \infty} \frac{\log \Pr \{D > D_{\max}\}}{D_{\max}},
\end{eqnarray}
where the approximation (a) in \eqref{eqn:qos exp lc} follows the fact that the arrival rate $r_{\mathrm{CC}}$ goes infinity since the content can be obtained locally in the C-RAN cluster, and the delay caused by arrival process can be ignored. When both the cloud and the cluster caching structures share a common delay experience, $\theta^{\mathrm{C}}_l$ in \eqref{eqn:qos exp cc} can be further derived as
\begin{eqnarray}\label{eqn:qos exp cc1}
\nonumber \frac{1}{\theta^{\mathrm{C}}_l} &=& - \lim \limits_{D_{\max} \rightarrow \infty} \frac{D_{\max} - (2 B_S) / r_{\mathrm{BH}}}{\log \Pr \{D > D_{\max}\}} \\
&=& \frac{1}{\theta^{\mathrm{T}}_l} + \lim \limits_{D_{\max} \rightarrow \infty} \frac{(2 B_S) / r_{\mathrm{BH}}}{\log \Pr \{D > D_{\max}\}}.
\end{eqnarray}
Recalling \eqref{eqn:del ineq} in Proposition 1, the probability of delay $D$ exceeding a threshold $D_{\max}$ can be given as $\lim_{D_{\max} \rightarrow \infty} \Pr \{D > D_{\max}\} \approx e^{- \theta^{\mathrm{T}}_l D_{\max}} $. Then \eqref{eqn:qos exp cc1} can be given as
\begin{eqnarray}
\frac{1}{\theta^{\mathrm{C}}_l} = \frac{1}{\theta^{\mathrm{T}}_l} + \frac{(2 B_S) / r_{\mathrm{BH}}}{\log e^{- \theta^{\mathrm{T}}_l D_{\max}}} = \bigg(1 - \frac{2 B_S}{r_{\mathrm{BH}} D_{\max}}\bigg)\frac{1}{\theta^{\mathrm{T}}_l}.
\end{eqnarray}
And the proof has been finished.
\end{proof}

Theorem 2 shows that the cluster content caching can reduce the loading on the backhaul links. In particular, due to \eqref{eqn:qos cc lc}, the capacity of backhaul links must satisfy the following constraint:
\begin{eqnarray}\label{eqn:bh cons}
c_{\mathrm{BH}} \geqslant r_{\mathrm{BH}} = \frac{2B_S}{D_{\max} (1 - \frac{\theta^{\mathrm{T}}_l}{\theta^{\mathrm{C}}_l})}.
\end{eqnarray}
This implies that the only way to reduce the performance gap of QoS guarantee with and without cluster content caching is to enlarge the capacity of backhaul link. In particular, as the QoS guarantee of cloud content caching approaches that of cluster content caching, e.g., $\frac{\theta^{\mathrm{T}}_l}{\theta^{\mathrm{C}}_l} \rightarrow 1$, it requires that the backhaul link capacity goes infinity. Therefore, the deployment of cluster content caches can migrate the loadings of backhaul links, and improve the QoS guarantees.

{Although some non-cacheable content cannot be stored locally, our studied cluster caching structure can still improve QoS guarantees with low power consumption in practical wireless networks, where both cacheable and non-cacheable content is requested. In particular, by storing some popular cacheable content in local cluster caches, the delay caused by obtaining non-cacheable content from cloud content cache can be reduced due to the loading mitigation of the links between local cloud centers and the data centers, and thus better QoS guarantees of non-cacheable content can be supported.}

\section{Performance Analysis Based on Stochastic Geometry}
To evaluate the performance of cluster content caching in C-RANs, the effective capacity is studied in this section. To provide an accurate analytical model, the locations of RRHs are modeled as a homogenous Poisson point process (PPP) $\Psi_{\mathrm{R}}$ with a given density $\lambda_\mathrm{R}$. The locations of users are modeled as a homogenous marked PPP $\Phi_\mathrm{u}(M_n)$ with a given density $\lambda_\mathrm{u}$, where the mark $M_n$ is the type of content that the $n$-th user $\mathrm{U}_n$ in $\Phi_\mathrm{u}(M_n)$ requests, and required content objects of users are independent with each other. Compared with other complicated point processes, homogeneous PPP is more suitable for characterizing the seamless coverage of C-RANs, while the edge effect exists in the clustering point process. Moreover, the difference between homogeneous PPP and clustering point process is their densities of points only. Except the derivation of density functions, the analysis of all these point processes will follow a similar mathematical paradigm, and our derived analytical results can be extended to other PPP model by replacing the density functions. Note that the caching coordinations are implemented by sharing a common cache in each cluster, while the coordinations of signal processing, such as coordinated multiple point (CoMP) transmissions and network beamforming, are not considered. {The reason that we consider a simple case without joint signal processing is to provide some meaningful insights into content caching in C-RANs. }

\subsection{Effective Capacity of A Typical User Associated with A Specific RRH}
Each user in $\mathcal{C}_T$ can be treated as a typical user since its co-channel interference follows an identical distribution based on the introduced PPP model. Without loss of generality, we consider $\mathrm{U}_i$ in $\mathcal{C}_T$ as a typical user, which accesses a given RRH that serves its requested content.

{Recalling \eqref{eqn:effect2}, the key step of analyzing the effective capacity is to study the expectation of $Z = (1 + \gamma)^{- \mu \theta_j \bar{T}}$, which can be derived by using the following approximation. In particular, the range of the receive signal-to-interference-plus-noise ratio (SINR) $\gamma$ can be divided into $N$ disjoint intervals, e.g., the $n$-th intervals can be expressed as $I_n = [\gamma_{n}, \gamma_{n+1})$, $n = 1, \ldots, N$, $0 = \gamma_1 < \cdots < \gamma_{N+1} = \gamma_{\max} < \infty$, and $\gamma$ can be represented by the $n$-th typical value $\bar{\gamma}_n$ when $\gamma \in I_n$. Then $\mathbb{E}\{Z\}$ can be approximately expressed as
\begin{equation}\label{eqn:expec z}
\mathbb{E}\{Z\} \approx \sum_{n = 1}^N \big(\Pr \{\gamma < \gamma_{n + 1}\} - \Pr \{\gamma < \gamma_n\} \big)\big(1 + \bar{\gamma}_n\big)^{ - \mu \theta_j W \bar T}.
\end{equation}
This approximation is similar to the scalar quantization of analog signals, and the optimum setting of a typical value of $\gamma$ in $I_n$ is $\bar{\gamma}_n = (\gamma_{n} + \gamma_{n+1}) / 2$ based on quantization theory \cite{b41}. Based on \eqref{eqn:expec z}, a tractable expression for effective capacity can be given as in the following theorem.}
\begin{theorem}
Considering a typical user $\mathrm{U}_i$, which accesses a specific RRH $\mathrm{R}_m$ that provides its required content $S_j$, its effective capacity can be expressed as
\begin{eqnarray}\label{eqn:eff cap g}
E_{i,m}(\theta_j, d_m) = - \frac{1}{\theta_j W \bar{T}} \ln (\mathcal{G}(\theta_j, d_m)),
\end{eqnarray}
where $\theta_j$ is the QoS exponent of $S_j$, $d_m$ is the distance between $\mathrm{U}_i$ and $\mathrm{R}_m$, $\mathcal{G}(\theta_j, d_m)$ is given as \eqref{eqn:G} on the following page,
\begin{figure*}[ht]
\begin{eqnarray}\label{eqn:G}
\mathcal{G}(\theta_j, d_m) = \sum_{n = 1}^N \bigg( e^{- 2 \pi A(\beta) \gamma_n^{\frac{2}{\beta}} \lambda_\mathrm{R} d_m^2 - \frac{\gamma_n d_m^\beta \sigma ^2}{\rho}} - e^{- 2 \pi A(\beta) \gamma_{n + 1}^{\frac{2}{\beta}} \lambda_\mathrm{R} d_m^2 - \frac{\gamma_{n + 1} d_m^\beta \sigma ^2}{\rho}}\bigg) \big(1 + \bar{\gamma}_n \big)^{- \mu \theta_j \bar{T}}
\end{eqnarray}
\hrulefill
\end{figure*}
where $A(\beta) = \frac{1}{\beta} \Gamma(\frac{2}{\beta}) \Gamma(1 - \frac{2}{\beta})$, and $\Gamma(x)$ denotes the gamma function.
\end{theorem}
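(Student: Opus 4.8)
The plan is to start from \eqref{eqn:effect2} and \eqref{eqn:expec z}, which already reduce the problem to evaluating $\mathbb{E}\{Z\}$ through the quantities $\Pr\{\gamma<\gamma_n\}$. Hence the whole proof boils down to obtaining a closed form for the SINR tail $\Pr\{\gamma\geqslant x\mid d_m\}$ at the typical user $\mathrm{U}_i$, conditioned on the serving distance $d_m$; once this is available, the successive CDF increments $\Pr\{\gamma<\gamma_{n+1}\}-\Pr\{\gamma<\gamma_n\}$ are inserted into \eqref{eqn:expec z} to identify $\mathbb{E}\{Z\}\approx\mathcal{G}(\theta_j,d_m)$, and \eqref{eqn:eff cap g} then follows by substituting $\mathcal{G}$ for $\mathbb{E}\{Z\}$ in \eqref{eqn:effect2}.

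First I would isolate the aggregate interference $I=\sum_{\mathrm{R}_j\in\mathcal{I}_\mathrm{R}}\rho d_j^{-\beta}|h_j|^2$, so that $\gamma=\rho d_m^{-\beta}|h_m|^2/(I+\sigma^2)$. Since $h_m\sim\mathcal{CN}(0,1)$, the channel power $|h_m|^2$ is unit-mean exponential and independent of $I$, which gives
\begin{equation}
\Pr\{\gamma\geqslant x\mid d_m\}=\mathbb{E}_I\big\{e^{-x d_m^\beta(I+\sigma^2)/\rho}\big\}=e^{-x d_m^\beta\sigma^2/\rho}\,\mathcal{L}_I\big(x d_m^\beta/\rho\big),
\end{equation}
where $\mathcal{L}_I$ denotes the Laplace transform of $I$. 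The core computation is therefore that of $\mathcal{L}_I$ for the homogeneous PPP $\Psi_\mathrm{R}$ of density $\lambda_\mathrm{R}$: using the probability generating functional of a PPP together with the i.i.d.\ unit-mean-exponential $|h_j|^2$'s, one gets $\mathcal{L}_I(s)=\exp\!\big(-\lambda_\mathrm{R}\int_{\mathbb{R}^2}\big(1-(1+s\rho\|z\|^{-\beta})^{-1}\big)\,\mathrm{d}z\big)$; passing to polar coordinates and substituting $v=\|z\|^\beta/(s\rho)$ turns the exponent into $-2\pi\lambda_\mathrm{R}(s\rho)^{2/\beta}\beta^{-1}\int_0^\infty v^{2/\beta-1}(1+v)^{-1}\,\mathrm{d}v$, and Euler's reflection formula $\int_0^\infty v^{a-1}(1+v)^{-1}\,\mathrm{d}v=\Gamma(a)\Gamma(1-a)$ with $a=2/\beta$ produces exactly the constant $A(\beta)=\frac{1}{\beta}\Gamma(\frac{2}{\beta})\Gamma(1-\frac{2}{\beta})$. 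Thus $\mathcal{L}_I(s)=\exp(-2\pi A(\beta)\lambda_\mathrm{R}(s\rho)^{2/\beta})$, and with $s=x d_m^\beta/\rho$ this yields $\Pr\{\gamma\geqslant x\mid d_m\}=\exp(-2\pi A(\beta)\lambda_\mathrm{R}x^{2/\beta}d_m^2-x d_m^\beta\sigma^2/\rho)$.

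The rest is bookkeeping: $\Pr\{\gamma<\gamma_{n+1}\}-\Pr\{\gamma<\gamma_n\}$ equals the difference of the two exponentials above evaluated at $x=\gamma_n$ and $x=\gamma_{n+1}$, which are precisely the bracketed terms in \eqref{eqn:G}; weighting by $(1+\bar\gamma_n)^{-\mu\theta_j\bar T}$ and summing over $n=1,\ldots,N$ gives $\mathcal{G}(\theta_j,d_m)\approx\mathbb{E}\{Z\}$, and \eqref{eqn:effect2} then delivers \eqref{eqn:eff cap g}.

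I expect the main obstacle to be the evaluation of $\mathcal{L}_I$, and specifically two modeling points that must be handled explicitly. First, the interfering RRHs are taken to be the \emph{entire} PPP $\Psi_\mathrm{R}$ over $\mathbb{R}^2$ with no guard region removed around $\mathrm{u}_T$; this is what keeps the radial integral scale-free and yields the clean closed form with the constant $A(\beta)$ rather than an incomplete-gamma/hypergeometric expression, and it is consistent with $\mathrm{R}_m$ being the content-serving RRH rather than necessarily the nearest one. Second, convergence of $\int_0^\infty v^{2/\beta-1}(1+v)^{-1}\,\mathrm{d}v$ requires $0<2/\beta<1$, i.e.\ $\beta>2$, which is the standard path-loss regime and should be noted. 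The exponential tail of $|h_m|^2$, the PGFL step, and the final weighted sum are otherwise routine.
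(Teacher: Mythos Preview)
Your proposal is correct and follows essentially the same route as the paper: convert the SINR tail into a Laplace transform of the aggregate interference via the exponential law of $|h_m|^2$, evaluate that Laplace transform with the PGFL of the homogeneous PPP and a radial change of variables yielding the Beta-integral constant $A(\beta)$, and then insert the resulting CDF increments into the quantization approximation \eqref{eqn:expec z}. Your explicit remarks on integrating over the whole plane (no guard region, since $\mathrm{R}_m$ is a given RRH rather than the nearest one) and on the convergence requirement $\beta>2$ are points the paper leaves implicit.
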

\begin{proof}
To obtain a tractable expression for typical user effective capacity, the outage probability $\Pr\{\gamma < \gamma_n\}$ is required. By following a similar mathematical paradigm in \cite{b42}, it can be derived as
\begin{eqnarray}\label{eqn:eff cap1}
\nonumber && \Pr \big\{\gamma < \gamma_n\big\} \\
\nonumber &=& \mathbb{E}_{\Psi_{\mathrm{R}}, \mathrm{R}_j \in \Psi_\mathrm{R} / \{\mathrm{R}_m\}} \bigg\{ \Pr \bigg\{ \big| h_m \big|^2 < \frac{\gamma_n d_m^\beta}{\rho} \big( \mathcal{I} + \sigma ^2 \big) \bigg\} \bigg\} \\
\nonumber &=& 1 - \mathbb{E}_{\Psi_{\mathrm{R}}, \mathrm{R}_j \in \Psi_\mathrm{R} / \{\mathrm{R}_m\}} \bigg\{ e^{ - \frac{\gamma_i d_m^\beta}{\rho} \big(\mathcal{I} + \sigma^2\big)}\bigg\} \\
\nonumber &=& 1 - \underbrace{e^{-\frac{{\gamma_n} d_m^{\beta} \sigma^2}{\rho}} \mathbb{E}_{\Psi_\mathrm{R}} \bigg\{\prod_{\mathrm{R}_j \in \Psi_\mathrm{R} / \{\mathrm{R}_m\}} \frac{1}{1 + {\gamma_n} d_m^{\beta} d_j^{-\beta}} \bigg\}}_{\mathcal{K}_1},\\
\end{eqnarray}
where $\mathcal{I} = \sum_{\mathrm{R}_j \in \Psi_{\mathrm{R}} / \{\mathrm{R}_m\}} \rho d_j^{-\beta} |h_j|^2$ denotes the co-channel interference, and $|h_i|^2$ of each channel follows independent and identically exponential distribution. Based on the probability generating functional (PGFL) of PPP, ${\mathcal{K}_1}$ can be expressed as
\begin{eqnarray}\label{eqn:eff cap5}
{\mathcal{K}_1} = \prod_{l=1}^L e^{-\frac{{\gamma _i} d_m^{\beta} \sigma^2}{\rho}} \mathcal{J}_1,
\end{eqnarray}
where
\begin{eqnarray}\label{eqn:eff cap3}
\nonumber \mathcal{J}_1 &=& \exp \bigg[-2 \pi \lambda_{\mathrm{R}} \int_0^\infty \bigg(1 - \frac{1}{1 + {\gamma_n} d_m^{\beta} d_j^{-\beta}} \bigg) d_j \mathrm{d} d_j\bigg]\\
\nonumber &=& \exp \bigg(-2 \pi \lambda_\mathrm{R} {\gamma_n}^{\frac{2}{\beta}} d_m^2 \int_0^\infty \frac{y}{y^\beta + 1} \mathrm{d} y\bigg) \\
&=& e^{-2 \pi \lambda_\mathrm{R} A(\beta) {\gamma_n}^{\frac{2}{\beta}} d_m^2}.
\end{eqnarray}
Based on \eqref{eqn:eff cap1}, \eqref{eqn:eff cap5},and \eqref{eqn:eff cap3}, $\Pr \{ \gamma < \gamma_n\}$ can be given as
\begin{eqnarray}\label{eqn:eff cap6}
\Pr \{\gamma < \gamma_n\} = 1 - e^{- 2 \pi A(\beta) \gamma_n^{\frac{2}{\beta}} \lambda_R d_m^2 - \frac{\gamma_n d_m^\beta \sigma ^2}{\rho}},
\end{eqnarray}
Substituting \eqref{eqn:eff cap6} into \eqref{eqn:expec z}, Theorem 1 is proved.
\end{proof}

{The approximation of \eqref{eqn:expec z} is an appropriate approach based on quantization theory, and has been widely applied in signal compression. The accuracy of quantization is mainly determined by the squared-error distortion, which can be improved by dividing the range of $\gamma$ into smaller intervals. In this paper, the quantization intervals are divided equally, and the simulation results in Section VI show that the analytical results match the Monte Carlo results perfectly when the maximum value of $\gamma$ is set as $\gamma_{\max} = 5 \times 10^4$ and the number of intervals is $N = 10^6$, respectively.}
\subsection{Average Effective Capacity of A Typical Cluster $\mathcal{C}_T$}
The average effective capacity is proposed as a metric to evaluate the performance of cluster content caching in C-RANs. In this part, each user is required to access its nearest RRH that provides its desired content object so that the received signal power can be maximized. {First, the hit ratio of the cluster content cache in $\mathcal{C}_T$ is denoted as $P_{\mathrm{hit}}$, which characterizes the probability that the requests can be satisfied at $\mathcal{U}_T$ \cite{b44}
\begin{eqnarray}\label{eqn:eff cap7}
P_{\mathrm{hit}} = \frac{\mathrm{Number~of~requests~served~by}~\mathcal{U}_T}{\mathrm{Total~number~of~user~requests}} = \sum_{l \in \mathcal{U}_T} P_l,
\end{eqnarray}
where $P_l$ is the probability that the content object $S_l$ is requested by the users, which can capture the popularity of $S_l$. Note that the summation of popularity ratios is 1, and thus the hit ratio follows the constraint $P_{\mathrm{hit}} = \sum_{l \in \mathcal{U}_T} P_l \leqslant \sum_{l \in \Omega_C} P_l =1$, which is a probability measure.} And the average effective capacity of content object $S_l$ can be expressed as
\begin{eqnarray}\label{eqn:eff cap8}
\bar{E}_l = P_{\mathrm{hit}} \bar{E}(\theta^{\mathrm{T}}_l) + (1 - P_{\mathrm{hit}}) \bar{E}(\theta^{\mathrm{C}}_l),
\end{eqnarray}
where $\bar{E}(\theta^{\mathrm{T}}_l)$ and $\bar{E}(\theta^{\mathrm{C}}_l)$ are the average effective capacity of content object $S_l$ that is served by $\mathcal{U}_T$ and $\mathcal{U}_C$, respectively. Due to the differences in the popularity of content objects, the average effective capacity of $\mathcal{C}_T$ can be given as $\bar{E}_T = \sum_{l = 1}^L P_l \bar{E}_l$. Moreover, RRHs belonging in $\Psi_\mathrm{R}$ can be divided into $L$ disjoint partitions to serve $L$ different content objects. The formulation of a RRH set $\Psi_l$ that serves $S_l$ can be treated as an independent thinning process of a homogenous PPP $\Psi_\mathrm{R}$, which is also a homogenous PPP with a fixed density $\lambda_l$, and $\sum_{l = 1}^L \lambda_l = \lambda_{\mathrm{R}}$.

Based on the aforementioned assumptions and the results given in Theorem 1, tractable expressions for $\bar{E}(\theta^{\mathrm{T}}_l)$ and $\bar{E}(\theta^{\mathrm{C}}_l)$ in \eqref{eqn:eff cap8} can be obtained, and the following corollary of average cluster effective capacity is provided.
\begin{corollary}
When each user accesses to the nearest RRH that serves its desired content object, the average effective capacity of a typical cluster $\mathcal{C}_T$ can be written as
\begin{eqnarray}\label{eqn:eff cap9}
\bar{E}_T = P_{\mathrm{hit}} \sum_{l=1}^L \bar{E}(\theta^{\mathrm{T}}_l) + (1 - P_{\mathrm{hit}}) \sum_{l=1}^L \bar{E}(\theta^{\mathrm{C}}_l),
\end{eqnarray}
where $\bar{E}(\theta^{\mathrm{T}}_l)$ and $\bar{E}(\theta^{\mathrm{C}}_l)$ can be expressed as
\begin{eqnarray}\label{eqn:eff cap10}
\nonumber \bar{E}(\theta_l) = P_l \sum_{n = 1}^N \big[\mathcal{L}_l(\gamma_n) - \mathcal{L}_l(\gamma_{n+1})\big] \big(1 + \bar{\gamma}_n\big)^{ - \mu \theta_l W \bar{T}},\\
\theta_l = \theta_l^\mathrm{T},\theta_l^\mathrm{C},
\end{eqnarray}
$P_l$ denotes the popularity of $S_l$, and $\mathcal{L}_l(\gamma_n)$ can be given as \eqref{eqn:eff cap11} on the following page,
\begin{figure*}[ht]
\begin{eqnarray}\label{eqn:eff cap11}
\mathcal{L}_l(\gamma_n) = 1 - 2 \pi \lambda_l \int_0^\infty d_m e^{ - ( 2 \pi A(\beta) \gamma_n^{\frac{2}{\beta}} (\lambda_\mathrm{R} - \lambda_l) + \pi \lambda_l u(\gamma_n, \beta) + \pi \lambda_l) d_m^2} e^{ - \frac{\gamma_n d_m^\beta \sigma ^2}{\rho}} \mathrm{d} d_m
\end{eqnarray}
\hrulefill
\end{figure*}
where $u(\gamma_n, \beta) = \gamma_n^{{2}/ {\beta}} \int_{\gamma_n^{- {2}/ {\beta}}}^\infty (1 + x^{\beta / 2})^{-1} \mathrm{d} x$. In an interference limited C-RAN scenario, a tractable expression for $\mathcal{L}_l(\gamma_n)$ is given by
\begin{eqnarray}\label{eqn:eff cap12}
\mathcal{L}_l(\gamma_n) = 1 - \frac{1}{2 A(\beta) \gamma_n^{2 / \beta} (q_l - 1) + u(\gamma_n, \beta) + 1},
\end{eqnarray}
where $q_l = \lambda_\mathrm{R} / \lambda_l$.
\end{corollary}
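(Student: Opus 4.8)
The plan is to peel off the caching/popularity structure first and reduce the statement to a single distance average, then evaluate that average with stochastic‑geometry tools. By the construction preceding the corollary, $\bar{E}_T=\sum_{l=1}^L P_l\bar{E}_l$ and, by \eqref{eqn:eff cap8}, $\bar{E}_l=P_{\mathrm{hit}}\bar{E}(\theta^{\mathrm{T}}_l)+(1-P_{\mathrm{hit}})\bar{E}(\theta^{\mathrm{C}}_l)$ with $P_{\mathrm{hit}}$ the hit ratio \eqref{eqn:eff cap7}; collecting the two caching terms already gives the outer form \eqref{eqn:eff cap9}. Hence the content of the corollary is the evaluation of the per‑object quantity $\bar{E}(\theta_l)$ in \eqref{eqn:eff cap10}, which is $P_l$ times the single‑link expression \eqref{eqn:eff cap g}–\eqref{eqn:G} averaged over the random serving distance $d_m$ induced by the \emph{nearest‑RRH} association. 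Since the quantized expansion \eqref{eqn:expec z} is linear in the conditional tail probabilities $\Pr\{\gamma\ge\gamma_n\mid d_m\}$, the average over $d_m$ commutes with the sum over $n$, so the only object that must be recomputed for the new association rule is the marginal (distance‑averaged) SINR distribution $\mathcal{L}_l(\gamma_n):=\Pr\{\gamma<\gamma_n\}$; substituting it into \eqref{eqn:expec z} and multiplying by $P_l$ then delivers \eqref{eqn:eff cap10}, and thus \eqref{eqn:eff cap9}.

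Next I would set up the geometry. Because $\Psi_l$ is an independent thinning of $\Psi_{\mathrm R}$ of density $\lambda_l$, the distance $d_m$ from the typical user to its serving RRH (the nearest point of $\Psi_l$) has density $2\pi\lambda_l d_m e^{-\pi\lambda_l d_m^{2}}$, and, conditioned on $d_m$, the interference field splits into two \emph{independent} parts: the RRHs serving other content, $\Psi_{\mathrm R}\setminus\Psi_l$ of density $\lambda_{\mathrm R}-\lambda_l$, which — being independent of $\Psi_l$ — still form a homogeneous PPP on the whole plane; and the remaining points of $\Psi_l$, which by the nearest‑point conditioning form a PPP of density $\lambda_l$ supported on $\{v>d_m\}$ only. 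Using $|h_m|^{2}\sim\exp(1)$ as in \eqref{eqn:eff cap1}, the conditional outage probability factorizes as
\begin{eqnarray*}
\Pr\{\gamma<\gamma_n\mid d_m\}=1-e^{-\gamma_n d_m^{\beta}\sigma^{2}/\rho}\,\mathcal{L}_{\mathrm{out}}\bigl(\tfrac{\gamma_n d_m^{\beta}}{\rho}\bigr)\,\mathcal{L}_{\mathrm{in}}\bigl(\tfrac{\gamma_n d_m^{\beta}}{\rho}\bigr),
\end{eqnarray*}
where $\mathcal{L}_{\mathrm{out}},\mathcal{L}_{\mathrm{in}}$ are the Laplace transforms of the two interference sums, each obtained from the PPP probability generating functional exactly as in \eqref{eqn:eff cap3}.

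Then I would evaluate the two interference integrals. The substitution $v=d_m\gamma_n^{1/\beta}y$ collapses the full‑plane term to $2\pi(\lambda_{\mathrm R}-\lambda_l)A(\beta)\gamma_n^{2/\beta}d_m^{2}$ with $A(\beta)=\int_0^\infty\frac{y}{1+y^{\beta}}\,\mathrm dy=\tfrac1\beta\Gamma(\tfrac2\beta)\Gamma(1-\tfrac2\beta)$, reproducing the $e^{-2\pi A(\beta)\gamma_n^{2/\beta}(\lambda_{\mathrm R}-\lambda_l)d_m^2}$ factor of \eqref{eqn:eff cap11}; for the term supported on $\{v>d_m\}$ the same substitution, followed by $x=y^{2}$, gives the \emph{truncated} integral $\tfrac12\gamma_n^{2/\beta}\int_{\gamma_n^{-2/\beta}}^{\infty}(1+x^{\beta/2})^{-1}\mathrm dx=\tfrac12 u(\gamma_n,\beta)$, i.e.\ the in‑cluster interference contributes $e^{-\pi\lambda_l u(\gamma_n,\beta)d_m^{2}}$. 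Integrating over $d_m$ against $2\pi\lambda_l d_m e^{-\pi\lambda_l d_m^{2}}$ merges the $e^{-\pi\lambda_l d_m^{2}}$ of the density into the exponent (this is the extra ``$\pi\lambda_l$'' term), the ``$1$'' integrates to $1$ by normalization, and what remains is precisely \eqref{eqn:eff cap11}. For the interference‑limited specialization I would let $\sigma^{2}\to0$, so $e^{-\gamma_n d_m^{\beta}\sigma^{2}/\rho}\to1$ and the remaining integral is the elementary $\int_0^\infty d_m e^{-c d_m^{2}}\mathrm dd_m=\tfrac1{2c}$ with $c=2\pi A(\beta)\gamma_n^{2/\beta}(\lambda_{\mathrm R}-\lambda_l)+\pi\lambda_l u(\gamma_n,\beta)+\pi\lambda_l$; dividing through by $\pi\lambda_l$ and writing $q_l=\lambda_{\mathrm R}/\lambda_l$ yields \eqref{eqn:eff cap12}.

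The step I expect to be the crux is the interference decomposition together with the conditioning argument: one must justify that, given the serving RRH is the nearest point of $\Psi_l$ at distance $d_m$, the $\Psi_l$‑interferers are a PPP restricted to the exterior of $B(0,d_m)$ while the $\Psi_{\mathrm R}\setminus\Psi_l$‑interferers remain an unrestricted homogeneous PPP — this is where the independent‑thinning assumption and the reduced Palm distribution of a PPP enter — and then evaluate the resulting truncated interference integral in closed form, which is exactly the $u(\gamma_n,\beta)$ term; keeping its lower limit $\gamma_n^{-2/\beta}$ and the factor $\tfrac12$ correct through the two successive substitutions is the only genuinely non‑routine computation. Everything else (the outer decomposition over caching locations and popularities via \eqref{eqn:eff cap7}–\eqref{eqn:eff cap8}, the linearity of the quantization sum \eqref{eqn:expec z} used already in \eqref{eqn:eff cap6}, and the normalization of the nearest‑neighbour density) is bookkeeping.
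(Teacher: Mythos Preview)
Your proposal is correct and follows essentially the same route as the paper: reduce to $\bar{E}(\theta_l)=P_l\,\mathbb{E}_{d_m}\{E_{i,m}(\theta_l,d_m)\}$, split the interference into the ``other‑content'' full‑plane PPP of density $\lambda_{\mathrm R}-\lambda_l$ (the paper writes this as $\prod_{k\neq l}\mathcal{J}_1$) and the ``same‑content'' PPP restricted to $\{v>d_m\}$ (the paper's $\mathcal{J}_2$), apply the PGFL to each, average against the nearest‑neighbour density $2\pi\lambda_l d_m e^{-\pi\lambda_l d_m^{2}}$, and set $\sigma^2=0$ for \eqref{eqn:eff cap12}. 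Your treatment is somewhat more explicit about the Palm/thinning justification and the substitutions leading to $u(\gamma_n,\beta)$, but the argument is the same.
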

\begin{proof}
Due to the definition of $\bar{E}(\theta^{\mathrm{T}}_l)$ and $\bar{E}(\theta^{\mathrm{C}}_l)$ in \eqref{eqn:eff cap8}, they can be expressed as $\bar{E}(\theta_l) = P_l \mathbb{E}_{d_m}\{E_{i,m}(\theta_l, d_m)\}$, $\theta_l = \theta_l^\mathrm{T},\theta_l^\mathrm{C}$. Then the key step is to derive a tractable expression for $\mathcal{L}_l(\gamma_n)$ in \eqref{eqn:eff cap10}. Unlike Theorem 3, each user is required to access its nearest serving RRH, which establishes a constraint of the locations of RRHs serving $S_l$. Then $\mathcal{K}_1$ in \eqref{eqn:eff cap5} should be expressed as
\begin{eqnarray}\label{eqn:ave eff cap1}
{\mathcal{K}_1} = e^{-\frac{{\gamma _i} d_m^{\beta} \sigma^2}{\rho}} \mathcal{J}_2 \prod_{k \neq l} \mathcal{J}_1,
\end{eqnarray}
where $\mathcal{J}_1$ is given in \eqref{eqn:eff cap3}, and $\mathcal{J}_2$ can be expressed as
\begin{eqnarray}\label{eqn:ave eff cap2}
\nonumber \mathcal{J}_2 &=& \exp \bigg[-2 \pi \lambda_{\mathrm{R}} \int_{d_j}^\infty \bigg(1 - \frac{1}{1 + {\gamma_n} d_m^{\beta} d_j^{-\beta}} \bigg) d_j \mathrm{d} d_j\bigg]\\
&=& e^{ - \pi \lambda_l u(\gamma_n, \beta) d_m^2}.
\end{eqnarray}
The probability density function (PDF) of $d_m$ can be given as $f(d_m) = 2 \pi \lambda _i d_m e^{ - \pi \lambda_i d_m^2}$, and  $\mathcal{L}_l(\gamma_n)$ given as \eqref{eqn:eff cap11} can be obtained. In an interference limited scenario, the impacts of noise can be ignored, and thus $\sigma^2$ in \eqref{eqn:eff cap11} can be set as zero. Then $\mathcal{L}_l(\gamma_n)$ can be written as \eqref{eqn:eff cap12}. And Corollary 4 has been proved.
\end{proof}

When $P_{\mathrm{hit}} = 0$, $\bar{E}_T = \sum_{l=1}^L \bar{E}(\theta^{\mathrm{C}}_l)$ is the effective capacity of no-cluster content caching strategy in C-RANs. As introduced previously, the QoS exponents follow a constraint $\theta^{\mathrm{L}}_l \leqslant \theta^{\mathrm{C}}_l$. Then the performance improvement of effective capacity achieved by cluster content caching structure in C-RANs can be written as \eqref{eqn:ave eff cap} on the following page based on Corollary 4.
\begin{figure*}[ht]
\begin{eqnarray}\label{eqn:ave eff cap}
\nonumber \Delta \bar{E}_T &=& \bar{E}_T - \bar{E}_T|_{P_{\mathrm{hit}} = 0} = P_{\mathrm{hit}} \bigg[\sum_{l=1}^L \big(\bar{E}(\theta^{\mathrm{T}}_l) - \bar{E}(\theta^{\mathrm{C}}_l)\big)\bigg] \\
&=& P_{\mathrm{hit}} \bigg\{\sum_{l=1}^L P_l \bigg[\sum_{n = 1}^N \big(\mathcal{L}_l(\gamma_n) - \mathcal{L}_l(\gamma_{n+1})\big) \left(\big(1 + \bar{\gamma}_n\big)^{ - \mu \theta^{\mathrm{T}}_l \bar{T}} - \big(1 + \bar{\gamma}_n\big)^{ - \mu \theta^{\mathrm{C}}_l \bar{T}}\right)\bigg]\bigg\}.
\end{eqnarray}
\hrulefill
\end{figure*}
As shown in \eqref{eqn:ave eff cap}, the performance gains $\Delta \bar{E}_T$ increases as the hit ratio of local cluster caching $P_{\mathrm{hit}}$ increases. In particular, when all content can be served by the cluster content cache $\mathcal{U}_\mathrm{T}$, e.g., $M = L$ in \eqref{eqn:eff cap7}, the local content caching can approach its performance limits. Assuming that all the required content objects have the same QoS exponents of cloud and cluster content caching structures in $\mathcal{C}_T$, $\mathcal{U}_T$ should choose to keep the most popular content objects in its coverage area, which contribute more to the improvement of effective capacity.

{\subsection{Energy Efficiency Performance of A Typical Cluster $\mathcal{C}_T$}
The average effective capacity to the average power consumption ratio in a typical cluster $\mathcal{C}_T$ with a cluster content cache is defined as follows:
\begin{equation}\label{eqn:en_eff_clucac}
\eta_T = \frac{\bar{E}_T}{\bar{P}_{T}} = \frac{P_{\mathrm{hit}} \sum_{l=1}^L \bar{E}(\theta^{\mathrm{T}}_l) + (1 - P_{\mathrm{hit}}) \sum_{l=1}^L \bar{E}(\theta^{\mathrm{C}}_l)}{\lambda_{\mathrm{R}} \pi r_T^2 P_{\mathrm{R}} + K P_{\mathrm{CC}} + (1 - P_{\mathrm{hit}}) P_{\mathrm{BH}}},
\end{equation}
where $\bar{E}_T$ denotes the average effective capacity of $\mathcal{C}_T$ given in Corollary 4, $\bar{P}_{T}$ is the average total power consumption with cluster content caching, $P_{\mathrm{R}}$ denotes the power consumption of radio transmission and baseband processing for each RRH, $r_T$ is the radius of considered cluster $\mathcal{C}_T$, $P_{\mathrm{CC}}$ is the power consumption of keeping a content object in $\mathcal{U}_T$, $K$ is the size of $\mathcal{U}_T$, and $P_{\mathrm{BH}}$ is the power consumption of obtaining desired content objects through backhaul. Recalling \eqref{eqn:en_eff_clucac}, $\eta_T$ can characterize the energy efficiency with a specific QoS requirement. Similarly, the average effective capacity to the average power consumption ratio without cluster content caching can be expressed as follows for $P_{\mathrm{hit}} = 0$:
\begin{eqnarray}
\eta_T|_{P_{\mathrm{hit}} = 0} = \frac{\bar{E}_T |_{P_{\mathrm{hit}} = 0}}{\bar{P}_{T}|_{P_{\mathrm{hit}} = 0}} = \frac{\sum_{l=1}^L \bar{E}(\theta^{\mathrm{C}}_l)}{\lambda_{\mathrm{R}} \pi r_T^2 P_{\mathrm{R}} + P_{\mathrm{BH}}}.
\end{eqnarray}
And the comparison of power consumption between two different structures is quantified by
\begin{eqnarray}\label{eqn:comp_pow}
\bar{P}_{T} - \bar{P}_{T}|_{P_{\mathrm{hit}} = 0} = K P_{\mathrm{CC}} - P_{\mathrm{hit}} P_{\mathrm{BH}} \leqslant 0.
\end{eqnarray}
The last inequity in \eqref{eqn:comp_pow} follows the fact $P_{\mathrm{CC}} \ll P_{\mathrm{BH}}$, e.g., $P_{\mathrm{BH}}$ is larger than 10 W \cite{b70}, while $P_{\mathrm{CC}}$ is smaller than 1 W \cite{b19}. Moreover, recalling \eqref{eqn:ave eff cap}, it shows that our studied cluster content caching structure can achieve higher effective capacity. Therefore, employing cluster caches is an energy efficient approach to improving QoS guarantees in C-RANs.}

\section{Joint Design of RRU Allocation and RRH Association Based on a Nested Coalition Formation Game}

Based on the study in Section \uppercase\expandafter{\romannumeral3}, the QoS guarantees of cluster content caching are impacted by the radio resource allocation strategy, such as the serving RRH association and the RRU allocation. The management of the two categories of radio resource can be modeled as a coupled integer programming problem, which is an NP-hard problem. To provide an efficient solution, both serving RRH association and RRU allocation can be formulated as coalition formation games, and the joint design of them can be modeled as a nested coalition formation game.

\subsection{Serving RRH Association Strategy}

Assuming that there exists a subset of content set $\Omega_C$, e.g., $\Omega_j = \{S_{j_1},\ldots,S_{j_M}\}$, whose members share the same RRU $\mathrm{RU}_j$, then RRHs in $\mathcal{C}_T$ will form $M$ disjoint sets, which are denoted as $\mathcal{R}_{j_1},\ldots,\mathcal{R}_{j_M}$ to transmit the corresponding content objects in $\Omega_j$. Denoting $\mathcal{W}_{j_m}$ as the set of users that require $S_{j_m}$ in $\mathcal{C}_T$, then the expected effective capacity of content object $S_{j_m}$, which is served by RRHs in $\mathcal{R}_{j_m}$, can be expressed as follows when the location information of the members in $\mathcal{R}_{j_m}$ and $\mathcal{W}_{j_m}$ are available:
\begin{eqnarray}\label{eqn:eff cap cont}
\bar{E}(\mathcal{R}_{j_m}) =
\begin{cases}
& \sum_{\mathrm{U}_i \in \mathcal{W}_{j_m}} E(\theta_{j_m}^\mathrm{T}, d_i),~S_{j_m}~\mathrm{is~in}~\mathcal{U}_T \\
& \sum_{\mathrm{U}_i \in \mathcal{W}_{j_m}} E(\theta_{j_m}^\mathrm{C}, d_i),~S_{j_m}~\mathrm{is~in}~\mathcal{U}_C
\end{cases},
\end{eqnarray}
where $E(\theta_{j_m}^\mathrm{T}, d_i)$ and $E(\theta_{j_m}^\mathrm{C}, d_i)$ follow the effective capacity of a specific user given in Theorem 3, $d_i$ denotes the distance between $\mathrm{U}_i$ and its serving RRH in $\mathcal{R}_{j_m}$, and $\theta_{j_m}^\mathrm{T}$ and $\theta_{j_m}^\mathrm{C}$ are the QoS exponents of content object $S_{j_m}$ served by local cluster cache and cloud cache, respectively. As shown in \eqref{eqn:eff cap cont}, the QoS guarantees of users in $\mathcal{W}_{j_m}$ are mainly determined by RRHs in $\mathcal{R}_{j_m}$, which have great impacts on the path loss that characterized by $d_i$.
\subsubsection{Utility Function Formulation}
The RRHs can be treated as a scarce resource in the studied C-RAN scenario, which are competed for by the users to improve their QoS guarantees. Therefore, the serving RRH association can be modeled as a coalition formation game, in which RRHs in $\mathcal{C}_T$ can be treated as players, and each RRH association result can be denoted as a partition $\Pi = \{\mathcal{R}_{j_1}, \ldots, \mathcal{R}_{j_M}\}$. The payoff of the $k$-th RRH $\mathrm{R}_k$ to join $\mathcal{R}_{j_m}$ can be defined as the effective capacity improvement achieved by its participation, and the corresponding utility function is given as \eqref{eqn:uti fun rrh} on the following page,
\begin{figure*}[ht]
\begin{eqnarray}\label{eqn:uti fun rrh}
\phi_k(\mathcal{R}_{j_m}) =
\begin{cases}
& \bar{E}(\mathcal{R}_{j_m} \cup \mathrm{R}_k) - \bar{E}(\mathcal{R}_{j_m}) - \underbrace{c_{\mathrm{RH}} \bigg(P_{\mathrm{R}} + \frac{1}{\mathcal{O}(\mathcal{R}_{j_m})} P_{\mathrm{CC}}\bigg)}_{\mathrm{The~cost~part}~\tau_{k}},~S_{j_m}~\mathrm{is~in}~\mathcal{U}_T \\
& \bar{E}(\mathcal{R}_{j_m} \cup \mathrm{R}_k) - \bar{E}(\mathcal{R}_{j_m}) - \underbrace{c_{\mathrm{RH}} \bigg(P_{\mathrm{R}} + \frac{1}{\mathcal{O}(\mathcal{R}_{j_m})} P_{\mathrm{BH}}\bigg)}_{\mathrm{The~cost~part}~\tau_{k}},~S_{j_m}~\mathrm{is~in}~\mathcal{U}_C
\end{cases}
\end{eqnarray}
\hrulefill
\end{figure*}
where $\mathcal{O}(A)$ denotes the number of elements in set $A$, $\tau_{k}$ in \eqref{eqn:uti fun rrh} is the cost of $\mathrm{RRH}_k$ to join $\mathcal{R}_{j_m}$, which is determined by the total power consumption. In particular, $P_{\mathrm{R}}$ denotes the power consumption of a RRH spending on radio frequency and baseband processing, $P_{\mathrm{CC}}$ and $P_{\mathrm{BH}}$ follow the denotation of \eqref{eqn:en_eff_clucac}, and $c_{\mathrm{RH}}$ is the energy efficiency coefficient to control the impact of the cost part $\tau_{k}$. Note that each content object is required by multiple RRHs in $\mathcal{C}_T$, and thus the cost of acquiring it should be split among the corresponding RRHs. Then the total utility of each coalition can be written as
\begin{eqnarray}\label{eqn:eff cap clu}
v(\mathcal{R}_{j_m}) = \sum_{\mathrm{R}_k \in \mathcal{R}_{j_m}} \phi_k(\mathcal{R}_{j_m}).
\end{eqnarray}
\subsubsection{A Distributed Coalition Formation Algorithm}
Based on the utility function given in \eqref{eqn:uti fun rrh}, the payoff of $\mathrm{R}_k$ to join $\mathcal{R}_{j_m}$ is determined by only the members in $\mathcal{R}_{j_m}$. Thus the proposed RRH association problem can be modeled as a hedonic coalitional formation game, and the coalition formation process is accomplished by using preference relations. Based on the definitions in \cite{b26}, the preference relation indicates that a RRH strictly prefers to join a specific coalition rather than the other one. For example, consider two coalitions $\mathcal{R}_{j_m}$ and $\mathcal{R}_{j_n}$, $\mathrm{R}_k$ will choose to join $\mathcal{R}_{j_m}$ if the two given coalitions satisfy the preference relation $\mathcal{R}_{j_m} \succ_k \mathcal{R}_{j_n}$. Due to \cite{b26}, the criteria of preference relation determination can be established as follows:
\begin{eqnarray}\label{eqn:pref rela}
\nonumber && \mathcal{R}_{j_m} \succ_k \mathcal{R}_{j_n} \Leftrightarrow \\
\nonumber \mathrm{C1}: && \phi_k(\mathcal{R}_{j_m}) > \phi_k(\mathcal{R}_{j_n}) \\
\nonumber \mathrm{C2}: && v(\mathcal{R}_{j_m}) + v(\mathcal{R}_{j_n} \backslash \{\mathrm{R}_k\}) > \\
&& v(\mathcal{R}_{j_n}) + v(\mathcal{R}_{j_m} \backslash \{\mathrm{R}_k\}),
\end{eqnarray}
where C1 implies that $\mathrm{R}_k$ pursues higher individual payoff, while C2 can avoid the loss of total utility caused by the move of $\mathrm{R}_k$.
\begin{algorithm}[t!]
\caption{(A hedonic coalition formation-based serving RRH association algorithm)} \label{alg:rrh}
\begin{algorithmic}
\STATE \textbf{Initialization}:
For a given subset of content objects $\Omega_j^{\mathrm{sub}} = \{S_{j_1},\ldots,S_{j_M}\}$, all RRHs in $\mathcal{C}_T$ can be divided into $M$ disjoint coalitions $\mathcal{R}_{j_1}, \ldots , \mathcal{R}_{j_M}$;
\STATE \textbf{Repeat}:
For each RRH coalition $\mathcal{R}_{j_m}$
 \begin{enumerate}
 \item Each RRH in $\mathcal{R}_{j_m}$, e.g., $\mathrm{R}_k$, negotiates with other coalitions $\mathcal{R}_{j_n}$, $j_m \neq j_n$, and obtains the individual and the total coalition utility values based on \eqref{eqn:uti fun rrh} and \eqref{eqn:eff cap clu}, respectively;
 \item If the obtained utility values satisfy the preference relation criteria given in \eqref{eqn:pref rela}, then $\mathrm{R}_k$ leaves $\mathcal{R}_{j_m}$ and joins $\mathcal{R}_{j_n}$;
 \end{enumerate}
\STATE\textbf{Termination}: When the members of $\mathcal{R}_{j_1}, \ldots , \mathcal{R}_{j_M}$ do not change.
\end{algorithmic}
\end{algorithm}

The conditions in \eqref{eqn:pref rela} ensure that each C-RAN cluster can achieve considerable QoS performance gains via a coalition formation game-based RRH association strategy, and the corresponding distributed RRH association algorithm can be described in Algorithm \ref{alg:rrh}. In particular, RRHs are assigned to serve the content objects $S_{j_1},\ldots,S_{j_M}$ in the initialization phase, and RRHs take turns to negotiate with other potential coalitions, and determine whether to join a new one based on the criteria given in \eqref{eqn:pref rela} until the obtained partition results converge. The following proposition asserts that the proposed algorithm will converge to a Nash stable partition.
\begin{proposition}(Theorem 1, \cite{b26})
Starting from any initial partition, Algorithm 1, which is based on a hedonic solution, will always converge to a Nash stable partition.
\end{proposition}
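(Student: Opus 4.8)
The plan is to turn the iterative ``switch'' dynamics of Algorithm~\ref{alg:rrh} into a strictly monotone trajectory of a finite-valued potential function, so that termination is forced by finiteness, and then to read Nash stability directly off the stopping rule. Concretely, I would attach to every partition $\Pi = \{\mathcal{R}_{j_1},\ldots,\mathcal{R}_{j_M}\}$ the social welfare
\[
\Phi(\Pi) = \sum_{m=1}^{M} v(\mathcal{R}_{j_m}),
\]
where $v(\cdot)$ is the coalition utility defined in \eqref{eqn:eff cap clu}. Since the set of RRHs in $\mathcal{C}_T$ is finite, there are only finitely many partitions of it into $M$ labelled coalitions, so $\Phi$ can assume only finitely many values.

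The first and central step is to show that a single admissible move strictly increases $\Phi$. When an RRH $\mathrm{R}_k$ leaves its current coalition and joins another, only those two coalitions are altered, so the increment $\Phi(\Pi') - \Phi(\Pi)$ between the new partition $\Pi'$ and the old one $\Pi$ reduces to the difference between the summed utilities of the two affected coalitions after the move and before it. But Algorithm~\ref{alg:rrh} executes a move only when the preference relation \eqref{eqn:pref rela} holds, and its condition C2 is exactly the statement that this summed utility strictly increases. Hence $\Phi$ is strictly increasing along the sequence of partitions produced by the algorithm; since it ranges over a finite set, no partition can recur, and the algorithm must halt after finitely many moves, yielding a terminal partition $\Pi^\star$.

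It then remains to certify that $\Pi^\star$ is Nash stable. By the termination condition of Algorithm~\ref{alg:rrh}, the process stops precisely when no RRH can perform a move, that is, when for every $\mathrm{R}_k$ and every coalition other than the one containing $\mathrm{R}_k$ the pair fails the criterion \eqref{eqn:pref rela}; in particular no $\mathrm{R}_k$ strictly prefers, under $\succ_k$, to leave its coalition for another. That is exactly the definition of a Nash stable partition, so $\Pi^\star$ is Nash stable. The routine parts are the bookkeeping that $\Phi(\Pi') - \Phi(\Pi)$ depends only on the two coalitions touched by $\mathrm{R}_k$ and the matching of that increment with condition C2. The conceptual crux, and the place where an argument using C1 alone would break down, is that chasing individual-payoff improvements could in principle cycle; it is the \emph{global} potential furnished by C2 that guarantees the dynamics terminate, and the argument uses nothing about the effective-capacity and power terms inside $\phi_k$ beyond their being well-defined real numbers, so it applies verbatim to the utility in \eqref{eqn:uti fun rrh}.
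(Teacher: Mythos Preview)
The paper does not supply its own proof of this proposition; it simply invokes Theorem~1 of \cite{b26}. Your potential-function argument is correct and is essentially the standard proof in the hedonic coalition formation literature (and in \cite{b26} itself): condition~C2 in \eqref{eqn:pref rela} makes the social welfare $\Phi(\Pi)=\sum_m v(\mathcal{R}_{j_m})$ a strict potential for the switch dynamics, finiteness of the partition set forces termination, and the stopping rule is precisely the absence of any $\succ_k$-improving deviation, i.e., Nash stability with respect to the preference relation defined by \eqref{eqn:pref rela}. Your remark that C1 alone could permit cycling while C2 supplies the monotone quantity is also the right diagnosis.
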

\subsection{RRU Allocation Strategy}
Besides the serving RRHs association, the performance of radio-access channels is also impacted by the RRU allocation strategy. Similar to RRH association, RRU allocation can also be formulated as a coalition formation game. In particular, the content objects $S_1,\ldots,S_L$ can be treated as players, which negotiate with each other to form disjoint coalitions, and the content objects in the same coalition share a common RRU.
\subsubsection{Utility Function Formulation}
Without loss of generality, we focus on a coalition $\mathcal{T}_i = \{S_{i_1}, \ldots, S_{i_F}\}$, whose members are the content objects sharing the $i$-th RRU. The payoff of coalition $\mathcal{T}_i$ can be defined as the sum effective capacity of the considered cluster $\mathcal{C}_T$, Then the corresponding utility function can be expressed as \eqref{eqn:eff cap rrb} on the following page,
\begin{figure*}[ht]
\begin{eqnarray}\label{eqn:eff cap rrb}
\psi(\mathcal{T}_i) =
\bigg[\sum_{S_{i_n} \in \mathcal{T}_i} \bar{E}(\mathcal{R}_{i_n}) - \underbrace{c_{\mathrm{RU}} \bigg(\sum_{S_{i_n} \in \mathcal{T}_i} \mathcal{O}(\mathcal{R}_{i_n}) P_{\mathrm{R}} + \mathcal{O}(\mathcal{U}_T) P_{\mathrm{CC}} + \mathcal{O}(\mathcal{U}_C) P_{\mathrm{BH}}\bigg)}_{\mathrm{The~cost~part}~\varrho_{i}}\bigg]^+
\end{eqnarray}
\hrulefill
\end{figure*}
where $\bar{E}(\mathcal{R}_{i_n})$ is the expected effective capacity of content object $S_{i_n}$ given in \eqref{eqn:eff cap cont}, $\varrho_{i}$ is the cost part of forming coalition $\psi(\mathcal{T}_i)$, $c_{\mathrm{RU}}$ is an energy efficiency coefficient to control the impact of cost in \eqref{eqn:eff cap rrb}, and $(a)^+ = \max(a,0)$.

\subsubsection{A Distributed Merge and Split Algorithm}
Unlike the serving RRH association problem, the payoff of each content in RRU allocation is not related only to the members in its own coalition, since the spectral efficiency is determined by the partition result via the coalition formation process. Therefore, the hedonic coalition formation algorithm in the previous subsection is not applicable. Moreover, due to the existence of costs, our studied game is non-superadditive with an empty core, and thus the grand coalition cannot be formed \cite{b27}. To solve the RRU allocation problem efficiently, a distributed merge and split method is studied in this subsection.

The key idea of the proposed algorithm is to form the coalitions of content objects only through merge and split operations. Without loss of generality, we take the first $l$ coalitions $\mathcal{T}_1,\ldots,\mathcal{T}_l$ as an example, and the rules for merge and split operations are shown as follows:
\begin{itemize}
\item \textit{Merge rule:} If the utility functions of coalitions $\mathcal{T}_1,\ldots,\mathcal{T}_l$ satisfy $\sum_{j=1}^l \psi \big(\mathcal{T}_j\big) < \psi \big(\cup_{j=1}^l \mathcal{T}_j\big)$, then $\mathcal{T}_1,\ldots,\mathcal{T}_l$ merge as one cluster $\cup_{j=1}^l \mathcal{T}_j$.
\item \textit{Split rule:} If there exists a partition of a coalition $\mathcal{T}_j = \big\{\mathcal{P}_1, \ldots, \mathcal{P}_m\big\}$, whose utility functions satisfy $\psi \big(\mathcal{T}_j\big) < \sum_{i=1}^m \psi \big(\mathcal{P}_i\big)$, then $\mathcal{T}_j$ splits into $m$ disjoint coalitions $\mathcal{P}_1, \ldots, \mathcal{P}_m$.
\end{itemize}
\subsection{A Nested Coalition Formation Game-Based Algorithm}
Recalling \eqref{eqn:eff cap rrb}, the average sum effective capacity of content transmission in $\mathcal{C}_T$ is jointly determined by the RRH and the RRU allocations, and thus the two formulated coalitional games are coupled. To obtain a tractable method solving this problem, a nested coalition formation game-based algorithm is studied. In particular, the effective capacity of each content object $\bar{E}(\mathcal{R}_{j_m})$ given in \eqref{eqn:eff cap cont}, which is the payoff part in \eqref{eqn:eff cap rrb}, can be interpreted as the accumulation of effective capacity improvement achieve by RRHs joining the coalition $\bar{E}(\mathcal{R}_{j_m})$ gradually, e.g.,
\begin{eqnarray}\label{eqn:sum uti}
\nonumber \bar{E}(\mathcal{R}_{j_m}) &=& \bar{E}(\mathcal{R}_{j_m}) - \bar{E}(\varnothing) \\
&=& \sum_{\mathrm{R}_k \in \mathcal{R}_{j_m}} \big[\bar{E}(\mathcal{R}_{j_m} \cup \mathrm{R}_k) - \bar{E}(\mathcal{R}_{j_m})\big].
\end{eqnarray}
Such an interpretation coincides with the process of RRH association. Therefore, when the coefficients of cost control in \eqref{eqn:uti fun rrh} and \eqref{eqn:eff cap rrb} satisfy $c_{\mathrm{RH}} = c_{\mathrm{RU}} = c_0$, the following relationship between the utility functions of two studied coalitional games can be established:
\begin{equation}\label{eqn:rela uti fun}
\psi(\mathcal{T}_i) = \bigg[\sum_{S_{i_n} \in \mathcal{T}_i} v(\mathcal{R}_{i_n}) \bigg]^+ = \bigg[\sum_{S_{i_n} \in \mathcal{T}_i} \sum_{\mathrm{R}_k \in \mathcal{R}_{i_n}} \phi_k(\mathcal{R}_{i_n}) \bigg]^+.
\end{equation}
Equation \eqref{eqn:rela uti fun} shows that the utility of RRU allocation can be obtained by using the utility of RRH association. Moreover, due to the preference relation criteria given in \eqref{eqn:pref rela}, it ensures that the partition results of serving RRH association provide reliable utility for the coalitional game of RRU allocation. Then a nested coalitional formation can be proposed for the joint design of RRU allocation and RRH association. As shown in Algorithm \ref{alg:nest}, all the required content objects are randomly assigned to orthogonal RRUs, and form $N$ disjoint coalitions in the initialization phase. During the iteration phase, the content coalitions take arbitrary merge and split operations. The proposed hedonic coalitional game-based RRH association algorithm given in the previous part is nested to provide the total utility values of the studied C-RAN cluster $\mathcal{C}_T$ in each iteration. After obtaining a final partition result, each RRH in $\mathcal{C}_T$ should be check if it is the nearest serving RRH for any users, and those unnecessary ones are turned into sleep mode to improve energy efficiency.

Unlike the conventional coalition formation games, the utility value of the studied nested coalition formation game is jointly determined by the partition results of RRU allocation and RRH association, which may impact the convergent performance. For example, $\mathcal{T}_i$ and $\mathcal{T}_j$ merge as one coalition $\mathcal{T}_{i \cup j}$, since their utility functions follow the merge rule during a specific iteration. However, the Nash stable partition resulting from coalitional game-based RRH association algorithm may not be unique, and different initializations of RRH partitions or negotiation orders will lead to different utility values of content coalitions, which may change the relationship between $\psi(\mathcal{T}_i \cup \mathcal{T}_j)$ and $\psi(\mathcal{T}_i) + \psi(\mathcal{T}_j)$. Therefore, there exists the possibility that $\mathcal{T}_{i \cup j}$ may split into $\mathcal{T}_i$ and $\mathcal{T}_j$ once more, and the proposed algorithm may not converge. To avoid the instability of utility values, the initialization and the negotiation order should be given in the RRH association algorithm. In particular, during both the initialization and the iteration phases, the negotiation order of the considered set $\mathcal{B} = \{b_{k_1},\ldots,b_{k_L}\}$ is given by following the increasing order of the superscripts of its members, e.g.,
\begin{eqnarray}\label{eqn:nego ord}
\nonumber \Lambda : b_{k_1} \rightarrow \cdots \rightarrow b_{k_l} \rightarrow \cdots \rightarrow b_{k_L},~b_{k_1},\ldots,b_{k_L} \in \mathcal{B},\\
\mathrm{and}~k_1 < \cdots < k_l < \cdots < k_L.
\end{eqnarray}
Then the initializations of RRH partitions in Algorithm \ref{alg:nest} can be written as follows, in which RRHs are encouraged to join a coalition that can maximize its utility during the initialization phase of RRH association:
\begin{eqnarray}\label{eqn:init rrh}
\nonumber \mathrm{R}_k \in \mathcal{R}_{j_m},~s.t.~\phi_k(\mathcal{R}_{j_m}) = \max \limits_{S_{j_n} \in \Omega_C^j} (\phi_k(\mathcal{R}_{j_n})),\\
\mathrm{R}_k \in \mathcal{R} \times \Lambda,
\end{eqnarray}
where $\mathcal{R} \times \Lambda$ denotes an ordered set that generate based on $\mathcal{R}$ with a given order $\Lambda$ defined in \eqref{eqn:nego ord}. To verify the convergence performance of the proposed joint allocation algorithm, which is based on a nested coalition formation game, the following theorem is provided.
\begin{algorithm}[t!]
\caption{(A nested coalition formation game-based algorithm)} \label{alg:nest}
\begin{algorithmic}
 \STATE \textit{Step 1. Joint design of RRU allocation and RRH association}
 \STATE \textbf{Initialization}:
 Formulate $N$ disjoint coalitions of content objects $\mathcal{T}_1, \ldots , \mathcal{T}_N$, $1 \leqslant N \leqslant L$;
 \STATE \textbf{Repeat}:
 For each content coalition $\mathcal{T}_i ~ (\mathcal{T}_i \neq \varnothing)$
 \begin{enumerate}
 \item Merge operation: Negotiate with other coalitions, e.g., $\mathcal{T}_{j_1},\ldots,\mathcal{T}_{j_l}, ~j_1,\ldots,j_l \neq i$;
 \begin{itemize}
 \item Obtain the utility values of $\psi(\mathcal{T}_i)$, $\psi(\mathcal{T}_{j_1}),\ldots,\psi(\mathcal{T}_{j_l})$ and $\psi(\mathcal{T}_i \cup \mathcal{T}_{j_1} \cdots \cup \mathcal{T}_{j_l})$ based on \eqref{eqn:rela uti fun} and Algorithm \ref{alg:rrh}, in which the initial RRH partition is given in \eqref{eqn:init rrh}, and both the RRH coalitions and their members follow the negotiation order defined in \eqref{eqn:nego ord};
 \item If $\psi(\mathcal{T}_i) + \sum_{p=1}^l \psi(\mathcal{T}_{j_p}) < \psi(\mathcal{T}_i \cup \mathcal{T}_{j_1} \cdots \cup \mathcal{T}_{j_l})$, $\mathcal{T}_i = \big\{\mathcal{T}_i \cup \mathcal{T}_{j_1} \cdots \cup \mathcal{T}_{j_l}\big\}$, $\mathcal{T}_{j_1} = \cdots = \mathcal{T}_{j_l} = \varnothing$;
 \end{itemize}
 \item Split operation: For each subset $\mathcal{T}^{\mathrm{sub}}_{i_n}$ in $\mathcal{T}_i$
 \begin{itemize}
 \item Obtain the utility values of $\psi(\mathcal{T}_i)$, $\psi(\mathcal{T}^{\mathrm{sub}}_{i_n})$ and $\psi(\mathcal{T}_i / \mathcal{T}^{\mathrm{sub}}_{i_n})$ based on \eqref{eqn:rela uti fun} and Algorithm \ref{alg:rrh}, in which the initial RRH partition is given in \eqref{eqn:init rrh}, and both the RRH coalitions and their members follow the negotiation order defined in \eqref{eqn:nego ord};
 \item If $\psi(\mathcal{T}^{\mathrm{sub}}_{i_n}) + \psi(\mathcal{T}_i / \mathcal{T}^{\mathrm{sub}}_{i_n}) > \psi(\mathcal{T}_i)$, $\mathcal{T}_i = \mathcal{T}_i / \mathcal{T}^{\mathrm{sub}}_{i_n}$, and formulate a new coalition $\mathcal{T}_k = \mathcal{T}^{\mathrm{sub}}_{i_n}$;
 \end{itemize}
 \end{enumerate}
 \STATE

 \textbf{Termination}: When the members of each coalition do not change.
 \STATE \textit{Step 2. Identify RRHs that are not required by any user, and put them into sleep mode.}
\end{algorithmic}
\end{algorithm}
\begin{theorem}
The proposed joint allocation algorithm, which is obtained by solving a nested coalition formation game, converges to a $\mathbb{D}_{hp}$-stable partition, which means that all the possible partitions cannot recur with additional merge or split operations.
\end{theorem}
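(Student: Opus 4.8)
The plan is to prove convergence by the standard two-layer argument for nested merge-and-split coalition formation: a strictly monotone potential on a finite partition space. The one delicate point, and the step I expect to be the main obstacle, is to first show that the outer payoff $\psi(\cdot)$ is a genuine (single-valued) set function. This is exactly the concern raised in the paragraph preceding the theorem: the Nash-stable RRH partition returned by Algorithm~\ref{alg:rrh} is in general not unique, so $\psi(\mathcal{T}_i)$ could a priori depend on the internal choices of the inner game. I would resolve it by noting that once the inner game's initialization is fixed by \eqref{eqn:init rrh} and its negotiation order by \eqref{eqn:nego ord}, the inner process becomes fully deterministic --- at each turn the RRH under consideration moves, or does not, according to the \emph{strict} preference relation \eqref{eqn:pref rela}, scanning target coalitions in the prescribed order $\Lambda$ --- so the trajectory of inner partitions, and hence the Nash-stable partition it terminates at (termination being already guaranteed), is uniquely determined by the outer partition $\{\mathcal{T}_1,\ldots,\mathcal{T}_N\}$. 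Via \eqref{eqn:rela uti fun} this makes $\psi(\mathcal{T}_i)$, and the system utility $V = \sum_i \psi(\mathcal{T}_i)$, a well-defined function of the outer content/RRU partition.

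With $\psi$ well-defined I would use $V$ as a potential at the outer layer. A merge of $\mathcal{T}_{j_1},\ldots,\mathcal{T}_{j_l}$ is executed only when $\sum_p \psi(\mathcal{T}_{j_p}) < \psi(\cup_p \mathcal{T}_{j_p})$, and a split of $\mathcal{T}_j$ into $\mathcal{P}_1,\ldots,\mathcal{P}_m$ only when $\psi(\mathcal{T}_j) < \sum_i \psi(\mathcal{P}_i)$; in either case the coalitions not involved are untouched, so every executed operation raises $V$ strictly. The set of partitions of the finite content set $\{S_1,\ldots,S_L\}$ is finite (its cardinality is the $L$-th Bell number), and by the previous step each such partition carries a well-defined value of $V$; strict monotonicity therefore forbids any partition from recurring, and finiteness forces Algorithm~\ref{alg:nest} to halt after finitely many merge/split operations. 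The terminal partition admits no merge or split that strictly increases $V$, which by the merge and split rules is precisely the statement that no additional sequence of merge-and-split operations can be applied --- i.e., it is $\mathbb{D}_{hp}$-stable in the sense of \cite{b27}.

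Finally, I would observe that Step~2 of Algorithm~\ref{alg:nest} does not affect the conclusion: it puts into sleep mode only those RRHs that are not the nearest content-matching RRH of any user, and such an RRH contributes nothing to any $\bar{E}(\mathcal{R}_{j_m})$ in \eqref{eqn:eff cap cont} because users always attach to their nearest serving RRH; removing it leaves the partition and all coalition payoffs unchanged (only the power cost drops), so $\mathbb{D}_{hp}$-stability is preserved. Overall, the finiteness-plus-monotone-potential reasoning is routine and parallels the classical merge-and-split convergence proofs; the genuinely new work is confined to the determinacy argument of the first paragraph.
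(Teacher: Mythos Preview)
Your proposal is correct and rests on exactly the two ingredients the paper uses: (i) the fixed initialization \eqref{eqn:init rrh} and negotiation order \eqref{eqn:nego ord} make the inner hedonic game deterministic, so $\psi(\cdot)$ is a single-valued set function of the outer partition; and (ii) every executed merge or split strictly raises the total utility $V=\sum_i\psi(\mathcal{T}_i)$ on the finite lattice of content partitions, which forbids recurrence and forces termination at a $\mathbb{D}_{hp}$-stable point.

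The only difference from the paper is organizational. The paper argues by contradiction: it assumes a partition $\Pi^*$ recurs, uses monotonicity to force some coalition $\mathcal{T}_{j_q}^*$ to carry two different values $\psi_1<\psi_2$ at its two appearances, attributes that to two distinct RRH associations $\Xi_1\neq\Xi_2$, and then invokes the fixed initialization/order to argue $\Xi_1$ could not have been Nash-stable, contradicting Proposition~5. You instead isolate the determinacy of $\psi$ first and then run the monotone-potential argument directly. Your route is the cleaner and more standard one (it is essentially the textbook merge-and-split convergence proof of \cite{b27} once $\psi$ is known to be well-defined); the paper's contradiction wrapping buries the determinacy step inside the argument but contains no additional idea. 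Your closing remark on Step~2 is also sound and goes slightly beyond what the paper addresses.
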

\begin{proof}
Assuming that the final partition result is not $\mathbb{D}_{hp}$-stable, there exists one partition $\Pi^* = \{\mathcal{T}_1^*, \ldots, \mathcal{T}_N^*\}$ that can recur during the iteration process, and the corresponding transformation can be expressed as
\begin{eqnarray}\label{eqn:trans ord}
\Pi^* \rightarrow \Pi_1 \rightarrow \cdots \rightarrow \Pi_i \cdots \rightarrow \Pi^*.
\end{eqnarray}
The total utility value of content transmissions keeps increasing via the merge and split operations due to the rules introduced previously. Therefore, based on the transitivity of coalition utility caused by the partition transform given in \eqref{eqn:trans ord}, there exists a subset $\{\mathcal{T}_{j_1}^*,\ldots,\mathcal{T}_{j_l}^*\}$ of $\Pi^*$, and the utility values of its members satisfy the following relationship:
\begin{eqnarray}\label{eqn:rela rrb}
\sum_{p = 1}^l \psi_1(\mathcal{T}_{j_p}^*) < \sum_{p = 1}^l \psi_2(\mathcal{T}_{j_p}^*),
\end{eqnarray}
where $\psi_1(\mathcal{T}_{j_p}^*)$ denotes the utility of $\mathcal{T}_{j_p}^*$ when $\Pi^*$ first shows up in \eqref{eqn:trans ord}, and $\psi_2(\mathcal{T}_{j_p}^*)$ is defined similarly for the second appearance of $\Pi^*$. Due to \eqref{eqn:rela rrb}, there exists at least one content coalition $\mathcal{T}_{j_q}^*$, whose utility follows the relationship $\psi_1(\mathcal{T}_{j_q}^*) < \psi_2(\mathcal{T}_{j_q}^*)$, $\mathcal{T}_{j_q}^* \in \{\mathcal{T}_{j_1}^*,\ldots,\mathcal{T}_{j_l}^*\}$. The spectral efficiency stays the same during the two appearances of $\Pi^*$, since the RRU allocation results are identical. Then the difference between $\psi_1(\mathcal{T}_{j_q}^*)$ and $\psi_2(\mathcal{T}_{j_q}^*)$ must be caused by different serving RRH associations, which can be denoted as $\Xi_1 = \{\mathcal{R}^1_{i_1}, \ldots, \mathcal{R}^1_{i_k}\}$ and $\Xi_2 = \{\mathcal{R}^2_{i_1}, \ldots, \mathcal{R}^2_{i_k}\}$, respectively. As described in Algorithm \ref{alg:nest}, both the initialization and the negotiation order have been given, and thus $\Xi_2$ can be obtained via limited steps of partition transforms from $\Xi_1$, e.g.,
\begin{eqnarray}\label{eqn:trans ord rrh}
\Xi_1 \rightarrow \Xi_{s_1} \rightarrow \cdots \rightarrow \Xi_2.
\end{eqnarray}
As shown in \eqref{eqn:trans ord rrh}, there must exist a RRH that wants to leave its current coalition and join a new one due to the preference relation defined in \eqref{eqn:pref rela}, and thus $\Xi_1$ is not Nash-stable. Such a result contradicts the conclusion given in Proposition 4. Therefore, Algorithm 2 converges to a $\mathbb{D}_{hp}$-stable partition result, and the proof is complete.
\end{proof}

Compared with centralized algorithms with global information, our proposed algorithm can allocate the serving RRHs and the RRUs distributively, and only local information is required. Therefore, our studied algorithm can achieve considerable performance gains with low computational complexity. {Moreover, to further improve the performance of cluster content caching, global joint processing, such as dynamic cluster formation and inter-cluster interference coordination scheduling, should be considered, as well as content correlation for the placement and management of both the cluster and the cloud caches. This global joint processing can be supported by our studied system model due to the existence of the global controller and the cloud cache at the cloud center.}
\section{A Suboptimal Algorithm with Lower Computational Complexity}
The nested structure of Algorithm \ref{alg:nest} requires that the total utility of content transmissions is obtained by solving a coalitional game-based RRH association problem in each iteration. To reduce the process complexity, an efficient method is to decouple the problem as two independent coalition formation games. In this section, a suboptimal RRU allocation algorithm is studied, whose utility does not depend on the partition results of the serving RRH association.
\subsection{Utility Function Formulation Based on Shapley Value}
To reduce the computational complexity, an alternative utility function for RRU allocation should be established. The interests of RRHs might be conflicting among the content objects, which should be considered for the RRU allocation. In particular, the content objects with less interest conflict on RRH association tend to share a common RRU. Therefore, to characterize the interest of a specific content object $S_i$ on each RRH, we need to evaluate the importance of each RRH on $S_i$, and the Shapley value is applied for the utility function formulation.

In our studied RRU allocation problem, the Shapley value can be interpreted as the expected marginal contribution of each RRH for the transmission of $S_i$. In particular, a grand coalition $\mathcal{N}$ formed by all RRHs in the typical C-RAN cluster $\mathcal{C}_T$ to transmit $S_i$, and the Shapley value of $\mathrm{R}_j$ is defined as its expected marginal contribution when it joins the grand coalition in a random order, which can be expressed as follows:
\begin{equation}\label{eqn:shap val}
v_{i,j} = \sum_{\mathcal{N}_s \subseteq \mathcal{N} \backslash \{\mathrm{R}_j\}} \frac{\mathcal{O}(\mathcal{N}_s)!(\mathcal{O}(\mathcal{N}) - \mathcal{O}(\mathcal{N}_s) - 1)!}{\mathcal{O}(\mathcal{N})} \mathcal{E}_{\mathcal{N}_s, \mathrm{RRH}_j},
\end{equation}
where $\mathcal{E}_{\mathcal{N}_s, \mathrm{RRH}_j} = \bar{E}(\mathcal{N}_s \cup \mathrm{RRH}_j) - \bar{E}(\mathcal{N}_s)$ is the marginal contribution of $\mathrm{R}_j$ in coalition $\mathcal{N}_s$ based on \eqref{eqn:eff cap cont}, and $\frac{\mathcal{O}(\mathcal{N}_s)!(\mathcal{O}(\mathcal{N}) - \mathcal{O}(\mathcal{N}_s) - 1)!}{\mathcal{O}(\mathcal{N})}$ is the probability that such a subset $\mathcal{N}_s$ occurs when RRHs join the grand coalition under a random order.

The importance of each RRH for a specific content object can be evaluated by its Shapley value, e.g., $\mathrm{R}_j$ with a higher Shapley value $v_{i,j}$ is more important to $S_i$. Then the utility function of RRU allocation can be formulated based on the interest conflicts among the content objects that share the same RRU, which can be expressed as follows for a given content object $S_{j}$ to join a coalition $\mathcal{T}_i = \{S_{i_1}, \ldots, S_{i_F}\}$, $S_{j} \notin \mathcal{T}_i$:
\begin{eqnarray}\label{eqn:shap val uti}
\varphi_{j}(\mathcal{T}_i) = \sum_{S_{i_n} \in \mathcal{T}_i} \bigg(\sum_{k=1}^K |v_{j,k} - v_{i_n,k}|\bigg) - \varrho_{i},
\end{eqnarray}
where the cost part $\varrho_{i}$ follows the definition given in \eqref{eqn:eff cap rrb}. In particular, when $v_{j,k} = v_{i_n,k}$ for each RRH, it implies that $S_{j}$ and $S_{i_n}$ have the same interests on RRH association, and the payoff is set as zero since it is the most competitive case of RRH association. As the differences of the Shapley values increase, the interest conflicts of RRHs between $S_{j}$ and $S_{i_n}$ can be mitigated, and thus the payoff of $S_{j}$ increases. Note that cost of RRU sharing is the performance loss caused by co-channel interference, which becomes severer as the cardinality of the consider coalition $\mathcal{T}_i$ increases. Therefore, the grand coalition formation, which implies that all the content objects share a common RRU, is not the best choice in most instances, and the cost part in \eqref{eqn:eff cap rrb} can control the scale of each coalition efficiently.
\begin{algorithm}[t!]
\caption{(A suboptimal coalitional game-based algorithm)} \label{alg:sub opt}
\begin{algorithmic}
 \STATE \textit{Step 1. RRU allocation}
\begin{itemize}
 \item \textbf{Initialization}:
 All content objects in $\mathcal{C}_T$ are divided into $N$ disjoint coalitions $\mathcal{T}_{1}, \ldots , \mathcal{T}_{N}$;
 \item \textbf{Repeat}: For each coalition $\mathcal{T}_{k}$
 \begin{itemize}
 \item $S_j$ in $\mathcal{T}_{k}$ negotiates with other coalitions $\mathcal{T}_{m}$, $m \neq k$, and $\varphi_{j}(\mathcal{T}_k)$ and $\varphi_{j}(\mathcal{T}_m)$ can be obtained due to \eqref{eqn:shap val uti}, and the total utility values of each coalitions is based on \eqref{eqn:tot uti};
 \item If the obtained utility values satisfy the preference relation criteria given in \eqref{eqn:pref rela}, then $S_j$ transfers into a new coalition $\mathcal{T}_{m}$;
 \end{itemize}
 \item \textbf{Termination}: When the members of $\mathcal{T}_{1}, \ldots , \mathcal{T}_{N}$ do not change.
\end{itemize}
 \STATE \textit{Step 2. RRH association by following Algorithm \ref{alg:rrh}.}
\STATE \textit{Step 3. Identify RRHs that are not required by any user, and put them into sleep mode.}
\end{algorithmic}
\end{algorithm}

\subsection{A Suboptimal RRU Allocation Algorithm}
Based on \eqref{eqn:shap val uti}, the payoff function of each content object is related to the members in its own coalition only, and thus can be solved by a hedonic method as well. To establish the preference relation criteria given in \eqref{eqn:pref rela}, the total utility of $\mathcal{T}_i$ can be written as
\begin{eqnarray}\label{eqn:tot uti}
u(\mathcal{T}_i) = \sum_{S_{j} \in \mathcal{T}_i} \varphi_{j}(\mathcal{T}_i).
\end{eqnarray}
Then a hedonic RRU allocation algorithm can be given, which is similar to Algorithm \ref{alg:rrh}, and the Nash stability of its partition results can be guaranteed by Proposition 5. Algorithm \ref{alg:sub opt} provides a sub-optimal solution with lower computational complexity, in which the joint design of RRH and RRU allocations are decoupled as two independent coalition formation games.

{In Step 1 of Algorithm 2 and Step 1-2 of Algorithm 3, we assumed that all the RRHs are active, and thus all of them should transmit at the same time. Moreover, each user accesses its nearest RRH that serves its desired content. These assumptions are consistent with those in Section II, which do not change the distribution of interference. Therefore, the utility values of coalition formation games can be obtained based on the analytical results. The turning-off process is just an additional step to avoid spending power on keeping unnecessary RRHs active, which is determined by the partition result only.}
{\subsection{Comparison of Computational Complexity of Algorithm 2 and Algorithm 3}
\subsubsection{Computational Complexity Analysis of Algorithm 2}
In Algorithm 2, both RRH and RRU are allocated by solving coalition formation games. The computational complexity of the proposed algorithm is mainly determined by the communication between different coalitions and the computation of the utility function. Denoting the number of RRHs in $\mathcal{C}_T$ as $D$, the obtained partition results of the RRH association in Algorithm 2 are maximum coalitions in \cite{b50}, and its computational complexity can be estimated as $O((D+M) 2^{2D + M})$. Then we analyze the computation complexity of RRU allocation, which is based on the pairwise negotiation process. As introduced in \cite{b54}, the RRU allocation can be accomplished in $O(L^2)$ steps of negotiation. In each negotiation operation, a coalition formation game-based RRH association problem should be solved due to the nested structure. Note that $M$ increases linearly as $L$ increases, e.g., $M = aL$, $0 < a < 1$. Therefore, the computational complexity of Algorithm 2 can be estimated as $O(L^2(D+aL) 2^{2D + aL})$.

\subsubsection{Computational Complexity Analysis of Algorithm 3}
Although the computation of the Shapley value is NP-Hard, it can be approximated efficiently and accurately, such as the linear approximation method in \cite{b60}, where the computational complexity can be estimated as $O(D^3)$ in our studied problem. Similar to RRH association in Algorithm 2, $N$ RRUs can be allocated among $L$ content objects by solving a hedonic coalition formation game, and its computational complexity can be estimated as $O((1+b)L 2^{(2+b)L})$, where $N = bL$, $0 < b < 1$. Since the computation of the Shapley value and the allocations of RRHs and RRUs can be executed independently, the computational complexity of Algorithm 3 can be estimated as $O(D^3 + (1+b)L 2^{(2+b)L} + (D+aL) 2^{2D + aL})$, which is lower than that of Algorithm 2.

The joint design of RRU allocation and RRH association in Sections IV and V is related to the results of the previous section, since the utility functions are formulated based on the derived effective capacity provided by Theorem 3. Moreover, by using effective capacity as a metric, the resource allocation results are jointly determined by the channel capacity and the QoS. In particular, the RRH association strategy can be taken as an example. In conventional allocation schemes, each RRH should serve the content requested by its nearest user on the criterion of throughput. However, the situation is different in our studied caching-based model. When its nearest user requests content that is not stored locally, a RRH may choose to serve other content in its local cluster cache instead, which can achieve higher effective capacity due to the improvement in QoS. There exist similar cases in RRU allocation. Therefore, the employment of caches has significant impact on resource allocation in our scenario. Moreover, our proposed resource allocation algorithms are also applicable for the non-cacheable content, which can improve the delay experience.}
\begin{figure}[t]\centering
\epsfig{file=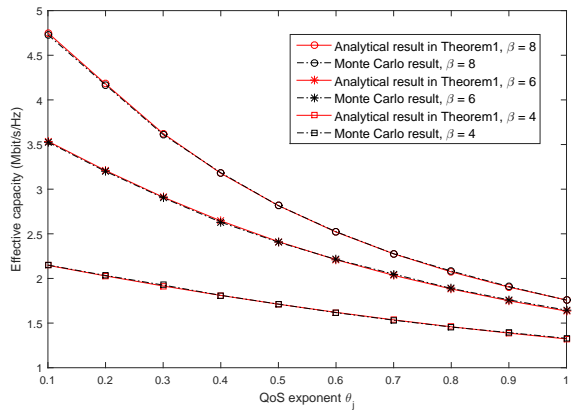, width=3.5in, clip=} \caption{Effective capacity of a typical user ($d_m$ = 50 m).}\label{fig:effcap_them1}
\end{figure}
\begin{figure} \centering
\subfigure[Effective capacity vs. size of cluster content cache]{\label{fig:subfig:aver_typa} \includegraphics[width=3.5in]{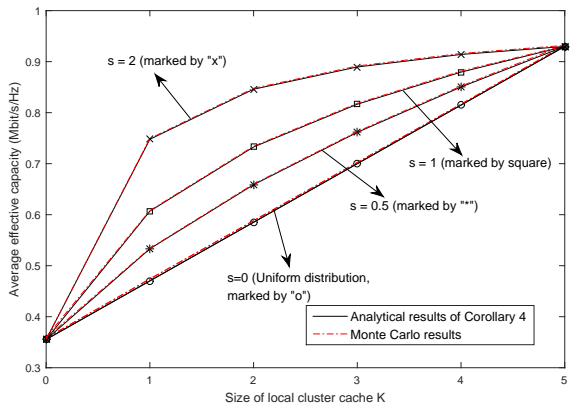}}
\subfigure[Average effective capacity to average power consumption ratio vs. size of cluster content cache]{\label{fig:subfig:aver_typb} \includegraphics[width=3.5in]{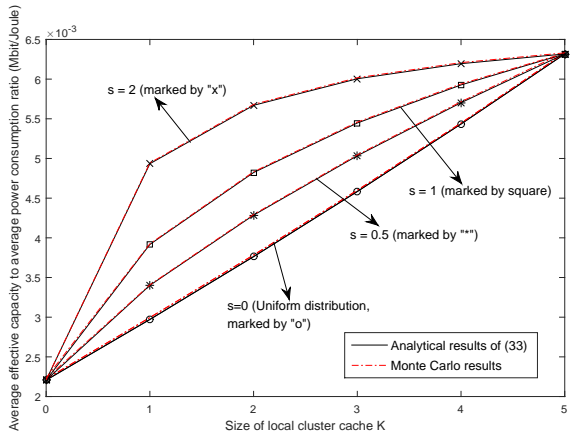}}
\caption{Performance of a typical C-RAN cluster.} \label{fig:aver_typ}
\end{figure}
\section{Simulation Results}
In this section, simulation results are provided to evaluate the performance of the cluster content caching structure in C-RANs. In particular, the channel are assumed to be block fading in each RRU, where $T = 1$ ms and $W = 1$ kHz. The densities of RRHs and users are given as $\lambda_{\mathrm{R}} = \lambda_{\mathrm{U}} = 5 \times 10^{-6}$. The cardinality of the required content set is $L = 5$, the size of each content object is set as $B_S = 1$ Mbits, and the popularity of content objects follows Zipf's law \cite{b33}. The power consumptions of RRHs in the active and the sleep modes are set as $P_{\mathrm{R}}^{\mathrm{act}} = 104$ W and $P^{\mathrm{sle}}_{\mathrm{R}} = 56$ W, respectively \cite{b32}, and the power consumption of cluster content caching and backhaul transmissions defined in are set as $P_{\mathrm{CC}} = 0.15$ W and $P_{\mathrm{BH}} = 10$ W.

{To verify the accuracy of our theoretical analysis, the effective capacity of a typical user is plotted in Fig. \ref{fig:effcap_them1}, where the path loss exponent is set as $\beta = 4, 6, 8$. As shown in the figure, the numerical results based on the analytical results given in Theorem 3 match the Monte Carlo results perfectly, which shows the validity of our theoretical derivations. Moreover, as the value of the path loss exponent $\beta$ increases, the effective capacity of a typical user increases. The reason is that the receive SINR can be improved as the signal to interference path loss ratio increases, e.g., $\gamma \sim d_m^{- \beta} / (\sum_{j \neq m} d_j^{- \beta}) = 1 / (\sum_{j \neq m} (d_m / d_j)^{\beta})$. Note that the serving RRH is usually nearer to the user than the interfering RRHs, e.g.,  $d_m \leqslant d_j$, and thus $\gamma$ is an increasing function of the path loss exponent $\beta$, and so is the effective capacity.

The performance of a typical cluster is evaluated in Fig. \ref{fig:aver_typ}, where the Zipf exponent is set as $s = 0, 0,5, 1, 2$, respectively, and the QoS exponents are set as $\theta_l^{\mathrm{T}} = 0.1$ and $\theta_l^{\mathrm{C}} = 0.6$, respectively. As shown in the figure, both the effective capacity and the energy efficiency increase as the size of cluster cache is enlarged, since more requests can be responded to locally with short delay. In particular, compared with the no-caching structure, e.g., $K = 0$, the average effective capacity and the average effective capacity to average power consumption ratio can be improved up to 0.57 Mbit/s/Hz and 0.004 Mbit/Joule when $K = 5$. Moreover, the performance of cluster content caching improves faster as $s$ increases, since the user interests converge to fewer popular content objects stored in the cluster cache.}

To further evaluate the performance improvement of cluster content caching, the performance of the proposed RRU allocation and RRH association algorithms is provided in Fig. \ref{fig:res_all}, where the effective capacity and the effective capacity to power consumption ratio are plotted, respectively. For the $i$-th RRU, the average effective capacity to power consumption ratio is defined as \eqref{eqn:ec pow ratio} on the following page.
\begin{figure*}[ht]
\begin{equation}\label{eqn:ec pow ratio}
\eta_{i} = \frac{\sum_{S_l}\bar{E}(\mathcal{R}_{S_l})}{\sum_{S_l} \mathcal{O}(\mathcal{R}_{S_l}) P_{\mathrm{R}}^{\mathrm{act}} + (N_T - \sum_{S_l} \mathcal{O}(\mathcal{R}_{S_l})) P^{\mathrm{sle}}_{\mathrm{R}} + \mathcal{O}(\mathcal{U}_T) P_{\mathrm{CC}} + \mathcal{O}(\mathcal{U}_C) P_{\mathrm{BH}}},~S_l \in \mathcal{T}_i.
\end{equation}
\hrulefill
\end{figure*}
As shown in Fig. \ref{fig:res_all}, both the effective capacity and the energy efficiency increase as $M$ increases and $\theta_l^\mathrm{C}$ decreases. Compared with the simulation results in Fig. \ref{fig:aver_typ}, the performance gains of cluster content caching can be enlarged to 0.95 Mbit/s/Hz and 0.0055 Mbit/Joule when $K=5$.

\begin{figure} \centering
\subfigure[Average effective capacity performance.]{\label{fig:subfig:effcap_hit} \includegraphics[width=3.5in]{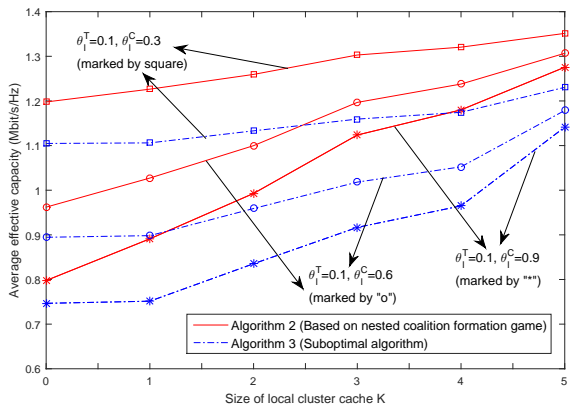}}
\subfigure[Average effective capacity to power consumption ratio]{\label{fig:subfig:en_hit} \includegraphics[width=3.5in]{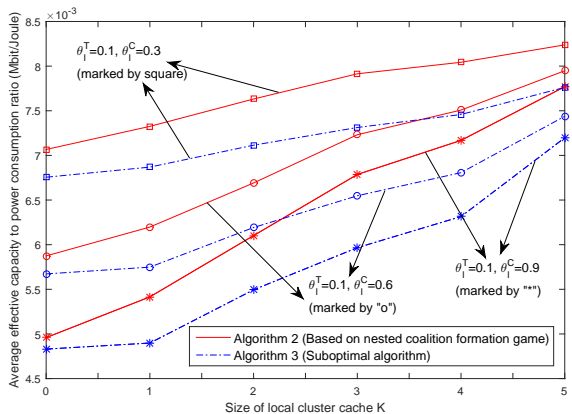}}
\caption{Performance of the proposed RRU allocation and RRH association algorithms.} \label{fig:res_all}
\end{figure}
\begin{figure} \centering
\subfigure[Average effective capacity performance.]{\label{fig:subfig:aver_typa} \includegraphics[width=3.5in]{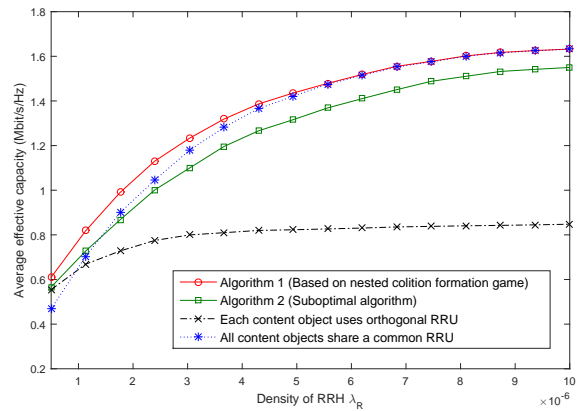}}
\subfigure[Average effective capacity to power consumption ratio.]{\label{fig:subfig:aver_typb} \includegraphics[width=3.5in]{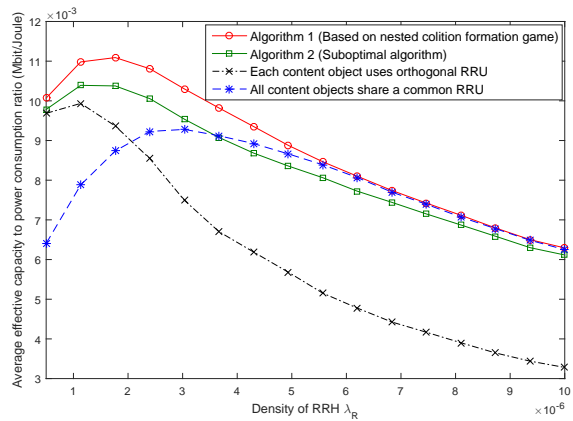}}\caption{Performance Comparison of different resource allocation schemes.} \label{fig:en_eff_den}
\end{figure}
In Fig. \ref{fig:en_eff_den}, the performance comparison of different RRU allocation and RRH association schemes is provided. Two previous resource allocation schemes are selected as two comparable schemes, which are the orthogonal RRU allocation scheme and the RRU full-reusing scheme. As shown in Fig. \ref{fig:subfig:effcap_hit}, Algorithm 2 can always achieve the best average effective capacity performance. There exists a performance gap between Algorithm 2 and Algorithm 3, since the utility function of RRU allocation in Algorithm 3 cannot ensure each content object makes the best choice. As $\lambda_{\mathrm{R}}$ increases, the interest conflict among content objects can be alleviated, and the performance of RRU full-reusing scheme approaches that of Algorithm 2. Although increasing the density of RRHs can always improve the effective capacity performance, it is not the best choice from an energy efficiency perspective. As shown in Fig. \ref{fig:subfig:en_hit}, the effective capacity to power consumption ratio increases in the low $\lambda_{\mathrm{R}}$ region, while it keeps decreasing in the high $\lambda_{\mathrm{R}}$ region. {Moreover, the case of multi-casting can be considered as a special case of our studied scheme, e.g., when content objects are served by using orthogonal RRUs, and the performance of the orthogonal RRU allocation scheme in Fig.6 can be treated as a lower bound on multi-casting.}

\section{Conclusion}
In this paper, a cluster content caching structure has been proposed in C-RANs, in which some requested content could be stored in local cluster content caches. By using a stochastic geometry-based network model, the effective capacity, which is defined as a link-level QoS metric, has been extended to our studied C-RAN scenario with content caching. Tractable expressions for the effective capacity and the energy efficiency have been derived to verify the performance gains of our proposed structure. The joint design of RRU allocation and RRH association has been studied to further improve the performance of cluster content caching, and two coalition formation game-based algorithms have been designed. The simulation results show that the effective capacity and the energy efficiency can be improved up to 0.57 Mbit/s/Hz and 0.004 Mbit/Joule when the number of required content objects is five, primarily due to relief of loading on the backhaul of the proposed structure. By employing the proposed optimization algorithms, the performance gains can be increased to 0.95 Mbit/s/Hz and 0.0055 Mbit/Joule, respectively.

\end{document}